\title{Dynamic Data Structures for $k$-Nearest Neighbor Queries} 
\author{Sarita de Berg}{Department of Information and Computing Sciences, Utrecht University, The Netherlands}{s.deberg@uu.nl}{}{}
\author{Frank Staals}{Department of Information and Computing Sciences, Utrecht University, The Netherlands}{f.staals@uu.nl}{}{}
\authorrunning{S. de Berg and F. Staals} 
\keywords{data structure, simple polygon, geodesic distance, nearest neighbor
  searching}  
\newcommand {\R} {\mathbb {R}}
\newcommand{\mkmcal}[1]{\ensuremath{\mathcal{#1}}\xspace}
\newcommand{\D}{\mkmcal{D}}
\newcommand{\T}{\mkmcal{T}}
\renewcommand{\P}{\mkmcal{P}}
\newcommand{\etal}{et al.\xspace}
\newcommand{\Fi}{F^{(i)}}
\newcommand{\Fx}[1]{F^{(#1)}}
\newcommand{\Fbad}[1]{F_{\text{bad}}^{(#1)}}
\newcommand{\Flive}[1]{F_{\text{live}}^{(#1)}}
\newcommand{\Ti}{\mkmcal{T}^{(i)}}
\newcommand{\tower}[1]{\mkmcal{T}^{(#1)}}
\newcommand{\eps}{\ensuremath{\varepsilon}\xspace}
\def\polylog{\operatorname{polylog}}
\begin{document}

\maketitle

\begin{abstract}
Our aim is to develop dynamic data structures that support $k$-nearest neighbors ($k$-NN) queries for a set of $n$ point sites in the plane in $O(f(n) + k)$ time, where $f(n)$ is some polylogarithmic function of $n$. The key component is a general query algorithm that allows us to find the $k$-NN spread over $t$ substructures simultaneously, thus reducing an $O(tk)$ term in the query time to $O(k)$. Combining this technique with the logarithmic method allows us to turn 
any static $k$-NN data structure into a data structure supporting both
efficient insertions and queries. For the fully dynamic case, this
technique allows us to recover the deterministic, worst-case,
$O(\log^2n/\log\log n +k)$ query time for the Euclidean distance
claimed before, while preserving the polylogarithmic update times. We
adapt this data structure to also support fully dynamic
\emph{geodesic} $k$-NN queries among a set of sites in a simple
polygon. For this purpose, we design a shallow cutting based,
deletion-only $k$-NN data structure. More generally, we obtain a
dynamic planar $k$-NN data structure for any type of distance functions for which we can build vertical shallow cuttings. We apply all of our methods in the plane for the Euclidean distance, the geodesic distance, and general, constant-complexity, algebraic distance functions.
\end{abstract}

\section{Introduction}

In the $k$-nearest neighbors ($k$-NN) problem, we are given a set of
$n$ point sites $S$ in some domain, and we wish to preprocess these
points such that given a query point $q$ and an integer $k$, we can find
the $k$ sites in $S$ `closest' to $q$ efficiently. This static problem
has been studied in many different
settings~\cite{Andoni08,Chan00,Chan16,lee1982kthorder_vd,LiuJournal}. We study the dynamic version of the planar $k$-nearest neighbors
problem, in which the set of sites $S$ may be subject to updates,
i.e. insertions and deletions. We are particularly interested in two
settings: (i) a setting in which the domain containing the sites
contains (polygonal) obstacles, and in which we measure the distance
between two points by their geodesic distance, the length of the
shortest obstacle avoiding path, and (ii) a setting in which only
insertions into $S$ are allowed (i.e. no deletions).

In many applications involving distances and shortest paths, the
entities involved cannot travel towards their destination in a straight
line. For example, a person walking through the city center may want
to find the $k$ closest restaurants that currently have seats
available. However, he or she cannot walk through walls, and hence, we
need to explicitly incorporate such obstacles into the problem. This
introduces additional complications as a single shortest path in a
polygon with $m$ vertices may have complexity $\Theta(m)$, and thus it
requires $\Theta(m)$ time to compute such a path. We wish to limit the
dependency on $m$ in the space, query, and update times of our data
structure as much as possible. In particular, we want to avoid having
to spend $\Omega(m)$ time and space when we insert or delete a new
site in $S$. In terms of the above example, we wish to avoid having to
spend $\Omega(m)$ time every time some seats open up causing a
restaurant to become available.

The second setting is motivated by classification
problems. In $k$-nearest neighbor classifiers the sites in $S$ all
have a label, and the label of some query point $q$ is predicted based
on the labels of the $k$ sites nearest to
$q$~\cite{cover1967nearest}. When this label turns out to be
sufficiently accurate, it is customary to then extend the data
set by adding $q$ to $S$. Hence, this naturally leads to the question
whether there is an insertion-only data structure that can efficiently
answer $k$-nearest neighbor queries.

\subparagraph{The static problem.} If the set of sites $S$ is static,
and $k$ is known a priori, one option is to build the (geodesic)
$k^\mathrm{th}$-order Voronoi diagram of $S$~\cite{Liu13}. This yields very fast $O(\log (n+m) + k)$ query
times, where $m$ is the complexity of the domain $\P$, however it is
costly in space, as even in a simple polygon the diagram has size
$O(k(n-k) + km)$. Moreover, $k$ needs to be known a priori. In the
scenario where the domain is the Euclidean plane, much more space
efficient solutions have been developed. There is an optimal linear
space data structure achieving $O(\log n + k)$ query time after
$O(n\log n)$ deterministic preprocessing
time~\cite{Afshani09,Chan16}. Very recently, Liu showed how to achieve
the same query and expected preprocessing time for general constant-complexity distance functions
for arbitrary sites in $\R^2$, using $O(n\log\log n)$ space
\cite{LiuJournal}. In
case $\P$ is a simple polygon with $m$ vertices, the problem has not
explicitly been studied. The only known solution using less space than
just storing the $k^\mathrm{th}$-order Voronoi diagram is the fully
dynamic $1$-NN structure of Agarwal \etal~\cite{Staals18}. It uses
$O(n\log^3 n\log m + m)$ space, and answers queries in
$O(k\polylog(n+m))$ time (by deleting and reinserting the $k$-closest
sites to answer a query).

\subparagraph{Issues when inserting sites.} Since nearest
neighbor searching is decomposable, we can apply the logarithmic
method~\cite{Overmars83} to turn a static $k$-NN searching data
structure into an insertion-only data structure. For example, in the
Euclidean plane this yields a linear space data structure with $O(\log^2 n)$
insertion time. However, since this partitions the set of sites $S$
into $O(\log n)$ subsets, and we do not know how many of the
$k$-nearest sites appear in each subset, we may have to consider up to
$k$ sites from \emph{each} of the subsets, thus yielding an
$O(k\log n)$ term in the query time. In
Section~\ref{sec:query_procedure} we will present a general technique
that allows us to avoid this additional $O(\log n)$ factor.

\subparagraph{Fully dynamic data structures.} In case we wish to
support both insertions and deletions, the problem becomes more
complicated, and the existing solutions thereby much more involved. When we
again consider the plane, and we wish to report only one nearest
neighbor (i.e. $1$-NN searching), several efficient fully dynamic data
structures exist~\cite{Chan10,Chan19,Kaplan17}. Actually, all these
data structures are variants of the same data structure by Chan
\cite{Chan10}. For the Euclidean distance, the current best result
using linear space achieves $O(\log^2 n)$ worst-case query time,
$O(\log^2 n)$ insertion time, and $O(\log^4 n)$ deletion
time~\cite{Chan19}. These results are deterministic, and the update
times are amortized. The variant by Kaplan \etal~\cite{Kaplan17}
achieves similar results for general distance functions: $O(n\log^3 n)$ space, $O(\log^2 n)$ worst-case query time, and expected $O(\polylog n)$ amortized update time. Using recent results on shallow
cuttings by Liu~\cite{LiuJournal} the space can be reduced to
$O(n\log n)$, the insertion time to $O(\log^2n)$, and the deletion time to $O(\log^4 n)$. The update times remain expected amortized. These data structures can also be used to answer $k$-NN
queries, but when answering a query we run into the same
problem as in the insertion-only case. That is, we consider $O(\log n)$ subsets, which results in a query time of
$O(\log^2 n + k\log n)$~\cite{Chan10, LiuJournal}.

For the Euclidean case, Chan argues that the above data structure for
$1$-NN searching can be extended to obtain
$O(\log^2 n/\log\log n + k)$ query time, while still retaining
polylogarithmic updates~\cite{Chan12-kNN}. Chan's data structure
essentially maintains a collection of $k$-NN data structures built on
subsets of the sites. A careful analysis shows that some of these
structures can be rebuilt during updates, and that the cost of these
updates is not too large. Queries are then answered by performing
$k_i$-NN queries on several disjoint subsets of sites $S_1,..,S_t$
that together are guaranteed to contain the $k$ nearest sites. However, perhaps because the details of the $1$-NN searching data structure
are already fairly involved, one aspect in the query algorithm is
missing: how to determine the value $k_i$ to query
subset $S_i$ with. While it seems that this issue can be fixed using
randomization~\cite{Chan21Fix}\footnote{The main idea is that the data
  structure as is \emph{can} be used to efficiently report all sites
  within a fixed distance from the query point (reporting all planes
  below a query point in $\R^3$). Combining this with an earlier
  random sampling idea~\cite{Chan00} one can then also answer $k$-NN
  queries.}, our general $k$-NN query technique
(Section~\ref{sec:query_procedure}) allows us to recover deterministic, worst-case
$O(\log^2 n/\log\log n + k)$ query time.

\subparagraph{Organization and Results.} We develop dynamic data
structures for $k$-NN queries in the plane whose query times are of the form
$O(f(n) + k)$, where $f(n)$ is some function of $n$. In particular, we
wish to avoid an $O(k\log n)$ term in the query time. To this end,
we present a general query technique that given $t$ disjoint subsets
of sites $S_1,..,S_t$, each stored in a static data structure that
supports $k'$-NN queries in $O(Q(n)+k')$ time, can report the $k$
nearest neighbors among $\bigcup_{i=1}^t S_i$ in $O(Q(n)t + k)$
time. Our technique, presented in Section~\ref{sec:query_procedure}, is completely
combinatorial, and is applicable to any type of sites. In
Section~\ref{sec:insertion-only}, we then use this technique to obtain
a $k$-NN data structure that supports queries in $O(Q(n)\log n + k)$
time and insertions in $O((P(n)/n)\log n)$ time, where $P(n)$ is the
time required to build the static data structure. This result again
applies to any type of sites. In the specific case of the Euclidean
plane, we obtain a linear space data structure with $O(\log^2 n + k)$
query time and $O(\log^2 n)$ insertion time. At a slight increase of
insertion time, we can also match the
query time of Chan's~\cite{Chan12-kNN} fully dynamic data
structure. For general, constant-complexity, algebraic distance
functions, we obtain the same query and insertion times (albeit the
insertion time holds in expectation). In the case where the sites $S$
are points inside a simple polygon \P with $m$ vertices, we use our
technique to obtain the first \emph{static} $k$-NN data structure that
uses near-linear space, supports efficient (i.e. without the
$O(k\log n)$ term) queries, \emph{and} can be constructed
efficiently. We now do get an $O(k\log m)$ term in the query time, as computing the distance between a pair of points already takes $O(\log m)$ time. Our data structure uses $O(n\log n+m)$ space, can be constructed in $O(n(\log n\log^2 m+\log^3 m) + m)$ time, and supports $O(\log(n+m)\log m + k\log m)$ time
queries. In turn, this then also leads to a data structure supporting efficient, $O(\log^2n\log^2m + \log n\log^3 m)$ time, insertions.

In Section~\ref{sec:Fully_Dynamic} we argue that our general
query algorithm is the final piece of the puzzle for the fully dynamic
case. For the Euclidean plane, this allows us to recover the
deterministic, worst-case $O(\log^2 n/\log\log n + k)$ query time
claimed before~\cite{Chan12-kNN}. Insertions take amortized
$O(\log^{3+\eps}n/\log\log n)$ time, whereas deletions take $O(\log^{5 +\eps}n/\log\log n)$ time. We obtain the same
query and (expected) update times in case of constant degree algebraic distance
functions.

For the geodesic case there is one final hurdle to take. Chan's
algorithm uses a partition-tree based ``slow'' dynamic $k$-NN query data
structure of linear size as one of its subroutines (see Section~\ref{sub:Chan-k-NN}
for details). Liu uses a similar trick (after appropriately
linearizing the distance functions into $\R^c$ for some constant $c$)
in their static $k$-NN data structure~\cite{Liu20}. Unfortunately, this
idea is not applicable in the geodesic setting, as it is unknown if an
appropriate (shallow) simplicial partition exists, and the geodesic
distance function cannot be linearized into a constant dimensional
space (the dimension would need to depend on $m$). Instead, we design
a simple, shallow-cutting based, alternative ``slow'' dynamic $k$-NN
structure, that does extend to the geodesic setting. This way, we end
up with an efficient (i.e. $O(\polylog(n+m))$ expected updates, $O(\log^2n\log^2m + k\log m)$ queries) fully
dynamic geodesic $k$-NN data structure.

\section{Preliminaries}
\label{sec:Preliminaries}

We can easily transform a $k$-nearest neighbors problem in
$\mathbb{R}^2$ to a $k$-lowest functions problem in $\mathbb{R}^3$ by
considering (the graphs of) the distance functions $f_s(x)$ of the
sites $s \in S$. We discuss these problems interchangeably,
furthermore we identify a function with its graph.

\subsection{Shallow cuttings}
\label{sub:Shallow_cuttings}

Let $F$ be a set of bivariate functions. We consider the arrangement of $F$ in
$\mathbb{R}^3$. The \textit{level} of a point $q \in \mathbb{R}^3$ is defined as the number of functions in $F$ that pass strictly below $q$. The \textit{at most }$k$\textit{-level} $L_{\leq k}(F)$ is then the set of points in $\mathbb{R}^3$ that have level at most $k$.

A \emph{$k$-shallow cutting} $\Lambda_k(F)$ of $F$ is a set of
disjoint cells covering $L_{\leq k}(F)$, such that each cell
intersects at most $O(k)$ functions~\cite{m-rph-92}. When $F$ is clear
from the context we may write $\Lambda_k$ rather than
$\Lambda_k(F)$. We are interested only in the case where the cells are
\emph{(pseudo-)prisms}: constant-complexity regions that are bounded
from above by a function in $F$, from the sides by vertical (with respect to
the $z$-direction) surfaces that pass through an intersection curve between two functions in $F$, and unbounded from below. 
For example, if $F$ is a set of planes, we can define the top of each prism to be a triangle. 
This allows us to find the prism containing a query point
$q$ by a point location query in the downward projection of the
cutting. See Figure \ref{fig:shallow_cutting}. The
subset $F_\nabla \subseteq F$ intersecting a prism $\nabla$ is the
\emph{conflict list} of $\nabla$. When, for every subset
$F' \subseteq F$, the lower envelope $L_0(F')$ has linear complexity
(for example, in the case of planes), a shallow cutting of
\emph{size} (the number of cells) $O(n/k)$ can be computed
efficiently~\cite{LiuJournal}. 

In general, let $T(n,k)$ be the time to construct a $k$-shallow cutting of maximum size $S(n,k)$ on $n$ functions, and $Q(n,k)$ be the time to locate the prism containing a query point. We assume these functions are non-decreasing in $n$ and non-increasing in $k$, and that $S(n,k) = \frac{1}{k}f(n)$, for some function $f(n)$.

\begin{figure}
\begin{minipage}{0.3\textwidth}
  \centering
    \includegraphics{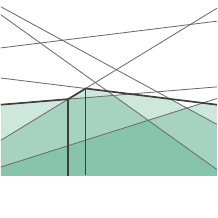}
    \caption{A $2$-shallow cutting of a set of lines $F$ in $\mathbb{R}^2$ consisting of $3$ prisms. The at most $k$-level $L_{\leq k}(F)$ is shown in green for $k=0,1,2$.\newline }
    \label{fig:shallow_cutting}
\end{minipage}
\hfill
\begin{minipage}{0.66\textwidth}
   \centering
    \includegraphics{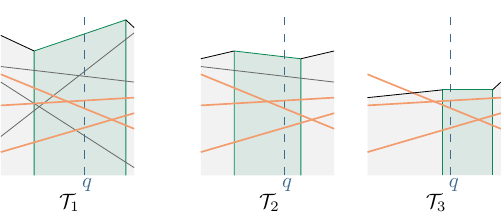}
    \caption{Example of the dynamic 1-NN data structure. Only one shallow cutting ($\Lambda_{k_j}$) is shown for each tower. The orange planes in $\T_1$ and $T_2$ are pruned when building $\Lambda_{k_{j-1}}$, but are not removed from the conflict lists in $\Lambda_{k_j}$. When querying for the $k$-NN, the green prisms in $\Lambda_{k_j}$ of each tower are considered. Note that the three orange planes occur in each of the conflict lists.}
    \label{fig:problem_k-nn}
\end{minipage}
\end{figure}

\subsection{A dynamic nearest neighbor data structure}\label{sub:dynamic-1NN}
We briefly discuss the main ideas of the dynamic nearest neighbor data
structure by Chan \cite{Chan10, Chan19} that was later improved by
Kaplan et al. \cite{Kaplan17}, as this also forms a
key component in our fully dynamic $k$-NN data structures. For a more detailed description we
refer to the original papers. For ease of
exposition, we describe the data structure when $F$ is a set of linear
functions (planes). To make sure the analysis is correct for our definition of $n$ (the current number of points in $S$), we rebuild the data structure from scratch whenever $n$ has doubled or halved. The cost of this is subsumed in the cost of the other operations~\cite{Chan10}.

The data structure consists of $t = O(\log_b n)$ ``towers'' $\tower{1},..,\tower{t}$, for some fixed $b \geq 2$. Each tower $\Ti$ consists of a hierarchy of shallow cuttings, which is built using a process that involves a subset of planes $\Fi \subseteq F$. For $\tower{1}$ we have $\Fx{1} = F$, and a hierarchy of $\ell = \lfloor\log(n/k_0)\rfloor$ shallow cuttings, for a fixed constant $k_0$. For $j = 0,..,\ell$ we have a $k_j$-shallow cutting of a subset of the planes $F_j \subseteq \Fx{1}$, where $k_j = 2^jk_0$. We set $F_{\ell} = \Fx{1}$ and construct these cuttings from $j =\ell$ to $0$. After computing $\Lambda_{k_j}(F_j)$, we find the set $F^{\times}_j$ of ``bad'' planes that intersect more than $c\log n$ prisms in $\Lambda_{k_\ell}(F_\ell),..,\Lambda_{k_j}(F_j)$, i.e. in all cuttings in the hierarchy so far. We \textit{prune} these planes by setting $F_{j-1} = F_j \setminus F^{\times}_j$, and removing all planes in $F^{\times}_j$ from the conflict lists of the prisms in $\Lambda_{k_j}(F_j)$. Note that these bad planes are removed only from the conflict lists of the current cutting, and can still occur in conflict lists of higher level cuttings. In the final $\Lambda_{k_0}(F_0)$ cutting, each conflict list has a constant size of $O(k_0)$. We denote by $\Fbad{1} = F^{\times}_0 \cup ... \cup F^{\times}_\ell$ the set of all bad planes generated during this process. By $\Flive{1}$ we denote the set of planes that have not been pruned during the process, so $\Flive{1} = F^{(1)} \setminus \Fbad{1}$. We then set $\Fx{i+1} = \Fbad{i}$ and recursively build $\tower{i+1}$ on the functions in $\Fx{i+1}$.

In the end, the set $F$ is partitioned into sets $\Flive{1},..,\Flive{t}$. When insertions and deletions take place, planes can move from a set $\Flive{i}$ to some $\Flive{i'}$, but the property that these sets form a partition of $F$ will be preserved. Kaplan et al. prove the following lemma on the size of $\Flive{1}$ after the preprocessing phase:

\begin{lemma}[Lemma 7.1 of \cite{Kaplan17}]\label{lem:Kaplan_pruning}
For any $\zeta \in (0, 1)$ there exists a sufficiently large (but constant) choice
of $c$, such that $|\Flive{1}| \geq (1-\zeta)n$ after building $\tower{1}$.
\end{lemma}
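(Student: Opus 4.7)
My plan is a simple double-counting argument on (plane, prism) incidences across all shallow cuttings built in $\tower{1}$. For each level $j \in \{0,\ldots,\ell\}$, the cutting $\Lambda_{k_j}(F_j)$ has size $O(|F_j|/k_j)$ (since lower envelopes of subsets of planes have linear complexity and $F_j \subseteq F$), and every prism has a conflict list of size $O(k_j)$. So level $j$ contributes $O(|F_j|/k_j)\cdot O(k_j) = O(|F_j|) = O(n)$ (plane, prism) incidences, and summing over the $\ell + 1 = O(\log n)$ levels gives a grand total of $O(n\log n)$ incidences across all cuttings in $\tower{1}$.

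Next I apply the pruning rule. Any plane $p \in \Fbad{1}$ was declared bad at the first level $j_p$ at which its cumulative prism count (over all cuttings computed so far, i.e., levels $\ell, \ell-1, \ldots, j_p$) exceeded the threshold $c\log n$. Thus every bad plane contributes strictly more than $c\log n$ incidences to the sum above. Hence
\[
|\Fbad{1}|\cdot c\log n \;<\; \sum_{j=0}^{\ell}\sum_{\nabla \in \Lambda_{k_j}(F_j)} |F_j \cap \nabla| \;=\; O(n\log n),
\]
which yields $|\Fbad{1}| \le Cn/c$ for some absolute constant $C$ absorbing the hidden constants from the shallow-cutting size and conflict-list length. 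Given any $\zeta \in (0,1)$, choosing $c$ to be any constant with $c \ge C/\zeta$ forces $|\Fbad{1}| \le \zeta n$, so that $|\Flive{1}| = n - |\Fbad{1}| \ge (1-\zeta)n$, as required.

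The step that needs the most care is the incidence count itself: the cuttings are built on the shrinking set $F_j$ rather than on the original $F$, so I must confirm this does not inflate the $O(n)$ per-level bound. This is immediate from $|F_j|\le n$, and within a single level each plane appears in each prism at most once so there is no double counting. A secondary subtlety is that a bad plane is only present at levels $j\ge j_p$, but this still guarantees at least $c\log n$ incidences per bad plane, which is exactly what the pigeonhole step needs. The same argument extends to more general distance functions whose shallow cuttings have size $O((n/k)\,\phi(n))$ for some factor $\phi(n)$, provided the pruning threshold is scaled to $c\,\phi(n)\log n$.
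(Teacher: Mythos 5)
Your double-counting argument is correct: the per-level incidence bound $O(|F_j|/k_j)\cdot O(k_j)=O(n)$, the $O(n\log n)$ total over the $O(\log n)$ levels, and the pigeonhole step charging each pruned plane more than $c\log n$ incidences together give $|\Fbad{1}|=O(n/c)$, and choosing $c=\Theta(1/\zeta)$ yields the claim. The paper itself does not prove this lemma but imports it from Kaplan et al.\ (Lemma~7.1), whose proof is essentially this same charging argument (with the threshold scaled by $S(n,1)/n$ for general functions, exactly as in your closing remark), so your proposal matches the intended proof.
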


When $\zeta = 1/b$, we get $O(\log_b n)$ towers, for some fixed $b \geq 2$,
as desired. According to Kaplan \etal~\cite{Kaplan17}, this is achieved by choosing $c \geq \frac{\gamma}{\zeta} = b\gamma = O(b)$, for some constant $\gamma$. Thus a plane occurs $O(b\log n)$ times in a
tower. Here, we first consider only the case where $b = 2$.

To build a single tower, naively we would need to compute $O(\log n)$ shallow cuttings, each of which takes $O(n \log n)$ time. By using information of previously computed cuttings, Chan~\cite{Chan19} recently achieved an overall construction time of $O(n \log n)$. The preprocessing time of the entire data structure thus adheres to the recurrence $P(n) \leq P(n/2) + O(n \log n)$. This solves to $P(n) = O(n \log n)$.

\subparagraph{Insertions.}
To insert a plane $f$ into $F$, we insert it into $\Fx{1}$. When we insert a function $f$ into $F^{(i)}$ we assign it to $\Fbad{i}$, and thus recursively insert it
in $F^{(i+1)}$. When $|\Fbad{i}|$ reaches $3/4 \cdot |\Fi|$ we rebuild the towers $\Ti,..,\tower{t}$. The first tower, $\Ti$, is built on the planes $\{f\} \cup \Flive{i} \cup ... \cup \Flive{t}$, and the following towers are again built recursively on the new sets $\Fbad{i'-1}$. Only after $\Omega(|\Fi|)$ insertions can such a rebuild occur. The insertion time is thus given by the recurrence $I(n) \leq I(3n/4) + O(P(n)/n)$, where $P(n)$ is the time to build the data structure on a set of $n$ planes. Using $P(n) = O(n\log n)$, this results in an amortized insertion time of $O(\log^2n)$.

\subparagraph{Deletions.}
Deletions are not performed explicitly on the conflict lists. Instead,
for each prism $\nabla$ we keep track of the number of planes in
$F_{\nabla}$ that have been deleted so far, denoted by
$d_{\nabla}$. When deleting a plane $f$, we increase $d_{\nabla}$ for
all prisms with $f \in F_{\nabla}$, and remove $f$ from the set
$\Flive{i}$ that includes $f$. When too many planes in a conflict
list have been deleted, we \textit{purge} the prism. In particular,
we purge a prism $\nabla$ in a $k_j$-shallow cutting when $d_{\nabla}
\geq k_j/2 =2^{j-1}k_0$. When a prism in $\Ti$ is purged, we mark it
as such, and we reinsert all planes $f'\in F_{\nabla} \cap
\Flive{i}$. These planes are effectively moved from $\Flive{i}$ to
some other $\Flive{i'}$. Note that we only reinsert planes that have
not been deleted so far. This scheme
ensures a prism is only purged after at least $k_j/2$ deletions, and
this causes at most $|F_\nabla| = O(k_j)$ reinsertions. Thus, each increment of $d_\nabla$ causes amortized $O(1)$ reinsertions. This gives an amortized deletion time of $O(\log^4n)$.

\subparagraph{Nearest neighbor queries.}
When answering a nearest neighbor query for a query point $q$, we simply find the prism containing $q$ in the lowest ($j = 0$) cutting of each $\Ti$ by a point location query in $O(\log n)$ time. We then go through each conflict list (of constant size) to find the plane that is lowest at $q$. If the plane we find for $\Ti$ is not in $\Flive{i}$, we ignore the result. When a prism has been purged, we simply skip it.  Finally, we return the plane that is lowest among the $O(\log n)$ planes that are found. As we perform $O(\log n)$ point location queries, the query time is $O(\log^2n)$.

\subparagraph{$k$-nearest neighbors queries.}

Answering $k$-nearest neighbors queries using this dynamic 1-NN data structure is straightforward. For each tower we consider the prism containing $q$ of the shallow cutting at level $j_k := \lceil \log(Ck/n)\rceil$, for some large enough constant $C$. The size of the conflict list of each of these prisms is $O(k)$, thus we can find the $k$-lowest \textit{live} planes in each conflict list in $O(k)$ time. Chan~\cite{Chan10} proves that it is indeed sufficient to consider only planes in these conflict lists. This query algorithm has a running time of $O(\log^2n +k\log n)$.

However, even if we were to know the exact number of the $k$-nearest
neighbors that occurs in each tower, we would not be able to support
$k$-NN queries in $O(\log^2 n + k)$ time. When a plane is pruned during the preprocessing, or when a prism is purged, the plane is only removed from the conflict lists of the current shallow cutting. It can thus still occur in other shallow cuttings in the hierarchy. This means that we can encounter the same plane multiple times when querying each tower for the $k$-lowest planes. See Figure \ref{fig:problem_k-nn} for an illustration. As there are $O(\log n)$ towers, this yields an $O(k\log n)$ term in the query time.

\subparagraph{General distance functions.}
Kaplan \etal~\cite{Kaplan17} showed how to adapt Chan's data structure to support more general shallow cutting algorithms. The main differences between their data structure and the one from Chan is that planes are only pruned when they appear in $(S(n,1)/n)\cdot c\log n$ conflict lists. Kaplan \etal essentially prove the following lemma.

\begin{lemma}[Kaplan \etal~\cite{Kaplan17}]\label{lem:kaplan}
    Given an algorithm that constructs a $k$-shallow cutting of size $S(n,k)$ on $n$ functions in $T(n,k)$ time, such that the prism containing a query point can be located in $Q(n,k)$ time, we can construct a data structure of size $O(S(n,1)\log n)$ that dynamically maintains a set of at most $n$ functions $F$. Reporting the lowest function at a query point $q$ takes $O(Q(n,1)\log n)$ time, inserting a function in $F$ takes $O((T(n,1)/n)\log^2 n)$ amortized time, and deleting a function from $F$ takes $O((T(n,1)S(n,1)/n^2) \log^4 n)$ amortized time.
\end{lemma}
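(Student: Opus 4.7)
The plan is to adapt Chan's dynamic nearest-neighbor data structure from Section~\ref{sub:dynamic-1NN} to general shallow cuttings, essentially following Kaplan \etal~\cite{Kaplan17}. The one critical modification is to the pruning threshold: instead of marking a function as bad when it lies in more than $c\log n$ conflict lists of already-computed cuttings in a tower, we mark it when it lies in more than $(S(n,1)/n)\cdot c\log n$ conflict lists. This adjustment is forced upon us because a $k$-shallow cutting on $n$ functions may now contain $S(n,k)=(n/k)f(n)$ prisms rather than $O(n/k)$ as for planes, so a function naturally appears in more lists.

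The first step is to verify that the analog of Lemma~\ref{lem:Kaplan_pruning} still holds under the new threshold, i.e.\ $|\Flive{1}|\geq (1-\zeta)n$. A double-counting argument suffices: the total size of all conflict lists across the $O(\log n)$ cuttings in one tower is $\sum_j k_j S(n,k_j)=O(S(n,1)\log n)$, while any bad function contributes at least $(S(n,1)/n)c\log n$ to this sum, so the number of bad functions is at most $n/c=\zeta n$ once $c=\Theta(1/\zeta)$. This ensures the hierarchy has $t=O(\log n)$ towers and partitions $F$ into $\Flive{1},\dots,\Flive{t}$. The space bound then falls out of the same accounting: every function in $\Fi$ contributes at most $O((S(n,1)/n)\log n)$ prisms inside $\Ti$ (whether it lives or is pruned, the pruning threshold controls its total occurrences), and since $|\Fi|$ decreases geometrically, summing across towers gives $O(S(n,1)\log n)$.

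For the preprocessing and insertion times I would redo the two Chan-style recurrences. In the general setting we cannot exploit Chan's trick of reusing information between successive shallow cuttings, so building the $O(\log n)$ cuttings in one tower naively costs $O(T(n,1)\log n)$. The recurrence $P(n)\leq P(n/2)+O(T(n,1)\log n)$ then solves to $P(n)=O(T(n,1)\log n)$. Insertions follow the scheme of Section~\ref{sub:dynamic-1NN}: when $|\Fbad{i}|$ reaches a constant fraction of $|\Fi|$ we rebuild, giving $I(n)\leq I(3n/4)+O(P(n)/n)$ and thus $I(n)=O((T(n,1)/n)\log^2 n)$. Queries are handled tower-by-tower: a single $Q(n,1)$ point location in the bottommost ($j=0$) cutting of each tower plus constant work on its $O(1)$-size conflict list gives $O(Q(n,1)\log n)$.

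The most delicate part is the deletion analysis, and this is where I expect the main obstacle: one must re-examine the purging potential argument under the generalized threshold. Each deletion increments $d_\nabla$ at every prism whose conflict list still contains the deleted function. Under the new rule a live function sits in at most $O((S(n,1)/n)\log n)$ prisms within its tower, and it also remains in the lists of all lower-indexed towers from which it was pruned; summing across the $O(\log n)$ relevant towers yields $O((S(n,1)/n)\log^2 n)$ increments per deletion. The purging rule ($d_\nabla\geq k_j/2$ triggers the reinsertion of the $O(k_j)$ still-live functions of the conflict list) continues to charge only $O(1)$ amortized reinsertions per increment, so the amortized reinsertion count per deletion is $O((S(n,1)/n)\log^2 n)$. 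Multiplying by the insertion cost $I(n)=O((T(n,1)/n)\log^2 n)$ produces the claimed bound $O((T(n,1)S(n,1)/n^2)\log^4 n)$. The crux is checking that pruning/purging invariants keep the $|\Fi|$ shrinking geometrically through arbitrary update sequences; Kaplan \etal handle this by careful bookkeeping of $\Flive{i}$ and $\Fbad{i}$ after every rebuild, and their argument carries over verbatim with the new threshold.
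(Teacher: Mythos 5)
Your proposal is correct and follows essentially the same route as the paper, which states this lemma as a consequence of the machinery described in Section~\ref{sub:dynamic-1NN} (towers of shallow cuttings, pruning with the threshold rescaled to $(S(n,1)/n)\cdot c\log n$, purging, and periodic rebuilds) and defers the details to Kaplan \etal. Your double-counting argument for the pruning lemma, the recurrences $P(n)\leq P(n/2)+O(T(n,1)\log n)$ and $I(n)\leq I(3n/4)+O(P(n)/n)$, and the deletion accounting ($O((S(n,1)/n)\log^2 n)$ amortized reinsertions, each costing $I(n)$) all match the properties the paper records in Lemma~\ref{lem:properties} and reproduce the claimed bounds.
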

  
From now on we consider the general variant, where the data structure consists of $O(\log_b n)$ towers. The following lemma describes the properties of the $1$-NN data structure we need to construct our general fully dynamic $k$-nearest neighbors data structure.

\begin{lemma}\label{lem:properties}
    Let $b \geq 2$ be any fixed value and $S(n,k)$ be the maximum size of a $k$-shallow cutting. There is a dynamic nearest neighbor data structure that has the following properties.
   \begin{enumerate}
        \item The data structure consists of $O(\log_b n)$ towers.
        \item A function occurs $O(b\log n \cdot S(n,1)/n)$ times in a conflict list in a single tower.
        \item The insertion time is $O(b \log_b n \cdot P(n)/n)$, where $P(n)$ is the preprocessing time.
        \item A deletion causes amortized $O(b \log_b n\log n \cdot S(n,1)/n)$ reinsertions.
        \item To find the $k$-NN of a query point $q$, it is sufficient to consider $O(\log_b n)$ prisms, namely for each tower the prism containing $q$ of the shallow cutting at level $j_k := \lceil \log(Ck/n)\rceil$, for some large enough constant $C$.
    \end{enumerate}
\end{lemma}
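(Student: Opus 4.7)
The plan is to redo the analysis of Kaplan \etal~\cite{Kaplan17} summarised in Section~\ref{sub:dynamic-1NN}, but to carry the parameter $b$ through every bound instead of fixing $b=2$. Concretely, I would instantiate Lemma~\ref{lem:Kaplan_pruning} with $\zeta = 1/b$ and re-tune the rebuild and purge thresholds so that the partition invariant between the $\Flive{i}$ is preserved while all costs scale linearly in $b$.

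For (1) and (2) I would set $\zeta = 1/b$ in Lemma~\ref{lem:Kaplan_pruning}, which yields $|\Fbad{1}| = |\Fx{1}| - |\Flive{1}| \leq |\Fx{1}|/b$ and hence $|\Fx{i+1}| \leq |\Fi|/b$. Iterating this shows the tower sequence terminates after $t = \log_b n + O(1)$ levels, giving (1). The same instantiation forces the pruning threshold to $c = \gamma/\zeta = O(b)$, so a function is only removed from a tower after appearing in strictly more than $c\log n\cdot S(n,1)/n = O(b\log n\cdot S(n,1)/n)$ conflict lists, which is (2).

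For (3) I would bound the amortised insertion cost one level at a time. Since Lemma~\ref{lem:Kaplan_pruning} guarantees $|\Fbad{i}|\leq|\Fi|/b$ right after a rebuild of $\Ti$, we may defer the next rebuild until $|\Fbad{i}|$ has, say, doubled, which takes $\Theta(|\Fi|/b)$ further insertions. A rebuild costs $P(|\Fi|)$, so it contributes amortised $O(bP(|\Fi|)/|\Fi|)$ per insertion. Summing across the $O(\log_b n)$ towers on the insertion path and using that $P(n)/n$ is non-decreasing gives $O(b\log_b n\cdot P(n)/n)$, which is (3).

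For (4) I would combine (2) with the existing purge scheme. Because a pruned function is removed only from the conflict lists of the single cutting where the pruning happens, a function currently in $\Flive{i}$ may still occupy conflict lists in towers $\tower{1},\ldots,\tower{i-1}$ that it previously traversed; by (2) a deletion therefore increments $d_\nabla$ in $O(b\log_b n\log n\cdot S(n,1)/n)$ prisms in total. A prism of the $k_j$-shallow cutting is purged only after $k_j/2$ such increments and then triggers $O(k_j)$ reinsertions, so every increment is amortised $O(1)$ reinsertions, giving (4). Finally, property (5) is Chan's query argument applied tower-by-tower: since the $\Flive{i}$ partition $F$, it suffices to find, in each $\Ti$, the $k$ lowest live functions at $q$, and inside a tower these all lie in the conflict list of the prism containing $q$ in the $O(k)$-shallow cutting at level $j_k$. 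The step I expect to be most delicate is (3): making sure that when an insertion simultaneously triggers rebuilds of $\Ti,\ldots,\tower{t}$, the cost at the largest rebuilt level absorbs the cost of the descendants without sneaking in an additional $\log n$ factor on top of the intended $\log_b n$.
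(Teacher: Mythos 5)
Your proposal is correct and takes essentially the same route as the paper, which states this lemma without a separate proof and instead derives each property from the preceding description of the Chan/Kaplan~\etal structure: instantiating Lemma~\ref{lem:Kaplan_pruning} with $\zeta = 1/b$ and $c = \gamma/\zeta = O(b)$ for properties (1) and (2), the rebuild-after-$\Omega(|\Fi|/b)$-insertions accounting for (3), the purge-after-$k_j/2$-increments charging for (4), and Chan's level-$j_k$ query argument for (5). Your parametrised re-derivation fills in exactly these steps, including the correct observation that a live function of $\Ti$ may still sit in conflict lists of earlier towers, which is where the $\log_b n$ factor in (4) comes from.
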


\section{Querying multiple $k$-NN data structures simultaneously}
\label{sec:query_procedure}

In this section we introduce a method to find the $k$-nearest neighbors of a query point $q$ spread over $t$ (disjoint) $k'$-NN data structures storing a set of sites $S$ simultaneously. Suppose the query time of such a $k'$-NN data structure is $O(Q(n) + k')$, for a non-decreasing function $Q$. Naively,
querying each data structure for the $k$ closest sites would take
$O(Q(n)t + tk)$ time. Our method allows us to find the $k$-NN over
all these data structures in $O(Q(n)t + k)$ time instead, thus reducing the $O(tk)$ term to $O(k)$. More formally, we prove the following result.

\begin{theorem}
  \label{thm:tsets-knn}
  Let $S_1,..,S_t$ be disjoint sets of point sites of sizes
  $n_1,..,n_t$, each stored in a data structure that supports $k'$-NN
  queries in $O(Q(n_i)+k'T)$ time, where $T$ is the time for evaluating $d(p,q)$. There is a $k$-NN data structure on
  $\bigcup_i S_i$ that supports queries in $O(Q(n)t + kT)$ time. The
  data structure uses $O(\sum_i C(n_i))$ space, where $C(n_i)$ is the
  space required by the $k$-NN structure on $S_i$.
\end{theorem}

\begin{proof}
We first describe the algorithm to query all $t$ data structures simultaneously, and then analyse its running time.

\subparagraph{Query algorithm.}
\label{sub:Query_algorithm}
We use the heap selection algorithm of
Frederickson~\cite{Frederickson93} to answer $k$-NN
queries efficiently. This algorithm finds the $k$
smallest elements of a binary min-heap of size $N \gg k$ in $O(k)$
time. Running this algorithm on a heap that contains all sites $s \in S$ exactly once, with the distance $d(s,q)$ as key for each site, would return the $k$-nearest neighbors. However, building this entire heap takes linear time. To overcome this issue, we do not
construct the entire heap we query before starting the algorithm. Instead,
the heap is expanded during the query when necessary. See
Figure~\ref{fig:expand_heap} for an example.

To know when to expand the heap, we require that the Frederickson algorithm only visits a node \emph{after} the parent of that node has been visited. This implies that once a leaf of the heap is visited by the algorithm, we can expand the heap further below that leaf without hindering the algorithm.

  \begin{figure}
      \centering
      \includegraphics{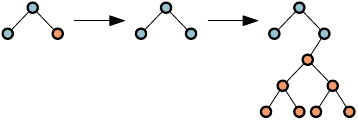}
      \caption{Example of expansion. Blue elements have been visited by the algorithm, orange elements have not. The expansion (building the
        next subheap) occurs when the last element is visited.}
      \label{fig:expand_heap}
  \end{figure}

  Next, we describe the heap $H$, on which we call the heap selection algorithm, in more detail. As stated before, $H$ contains
  all sites $s \in S$ exactly once, with the distance $d(s,q)$ as key
  for each site.
  Let $S_1,..,S_t$ be the partition of $S$ into $t$ disjoint sets, where $S_j$ is the set of sites stored in the $j$-th $k'$-NN data structure. For
  each set of sites $S_j$, $j \in \{1,..,t\}$, we define a heap
  $H(S_j)$ containing all sites in $S_j$. We then ``connect'' these
  $t$ heaps by building a dummy heap $H_0$ of size $O(t)$ that has the
  roots of all $H(S_j)$ as leaves. We set the keys of the elements of
  $H_0$ to $-\infty$. Let $H$ be the complete data structure (heap)
  that we obtain this way, see Figure \ref{fig:frederickson}. It
  follows that we can now compute the $k$ sites closest to $q$ by
  finding the $|H_0| + k$ smallest elements in the resulting heap $H$
  and reporting only the non-dummy sites.

What remains is how to (incrementally) build the heaps $H(S_j)$ while
running the heap selection algorithm. Each such heap consists of a
hierarchy of \textit{subheaps} $H_1(S_j),..,H_{O(\log n)}(S_j)$, such that every element of $S_j$ appears in exactly one $H_i(S_j)$. Moreover, since the sets $S_1,..,S_j$ are pairwise disjoint, this holds for any $s \in S$, i.e. $s$ appears in exactly one $H_i(S_j)$. The \textit{level 1} heaps, $H_1(S_j)$, consist of the $k_1 = Q(n)$ sites in $S_j$ closest to $q$, which we
find by querying the static data structure of $S_j$. The subheap $H_i(S_j)$ at level
$i > 1$ is built only after the last element $e$ of $H_{i-1}(S_j)$ is visited by the heap selection algorithm.  We then add a pointer from
$e$ to the root of $H_i(S_j)$, such that the root of $H_i(S_j)$
becomes a child of $e$, as in Figure \ref{fig:expand_heap}.

To construct a subheap $H_i(S_j)$ at level $i > 1$, we query the
static data structure of $S_j$ using $k_i = k_1 2^{i-1}$. The new
subheap is built using all sites returned by the query that have not
been encountered earlier. It follows that all elements of $H_i(S_j)$
are larger than any of the elements in
$H_1(S_j),..,H_{i-1}(S_j)$. Thus, the heap property is preserved.

\begin{figure}
    \centering
    \includegraphics{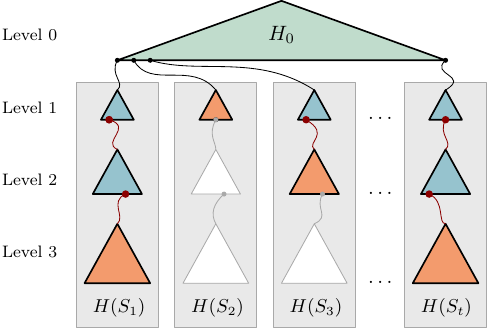}
    \caption{The heap that we construct for the $k$-nearest neighbors
      query. The subheaps of which all elements have been visited are indicated in \textit{blue}. The subheaps that have been built,
      but for which not all elements have been visited, are
      indicated in \textit{orange}. The \textit{white} subheaps have not been built so far,
      because not all elements of their predecessor have been visited.}
    \label{fig:frederickson}
\end{figure}

\subparagraph{Analysis of the query time.}
\label{sub:Query_analysis}

As stated before, finding the $k$-smallest non-dummy elements of $H$ takes $O(k + |H_0|)$ time~\cite{Frederickson93}. 
We now analyse the time used to construct $H$.

First, the level 0 and level 1 heaps are built. Building $H_0$ takes only $O(t)$ time. To build the level 1 heaps, we query each of the substructures using $k_1 = Q(n)/T$, where $T$ denotes the time to evaluate the distance function. In total these queries take $O((Q(n) + k_1T)t) = O(Q(n)t)$ time. 
Retrieving the next $k_i$ elements to build $H_i(S_j)$ for $i > 1$ requires a single query, and thus takes $O(Q(n)+k_iT)$ time. 
To bound the time used to build all heaps at level greater than 1, we first prove the following lemma.

\begin{lemma}\label{lem:sizeHi}
    The size of a subheap $H_i(S_j)$, $j \in \{1,..,t\}$, at level $i > 1$ is exactly $k_1 2^{i-2}$.
\end{lemma}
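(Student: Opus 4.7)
The plan is to prove by induction on $i$ a strengthened statement: after the subheaps $H_1(S_j),\dots,H_i(S_j)$ have been built, their union equals precisely the set of $k_i$ sites in $S_j$ closest to $q$. The claim $|H_i(S_j)| = k_1 2^{i-2}$ for $i>1$ will then follow immediately by subtracting sizes, since $k_i - k_{i-1} = k_1 2^{i-1} - k_1 2^{i-2} = k_1 2^{i-2}$.

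For the base case $i=1$, the statement holds by construction: $H_1(S_j)$ is defined as the $k_1 = Q(n)$ sites of $S_j$ closest to $q$, obtained by a single query to the static $k'$-NN data structure for $S_j$. For the inductive step, suppose $H_1(S_j)\cup\dots\cup H_{i-1}(S_j)$ equals the set $N_{i-1}$ of the $k_{i-1}$ closest sites of $S_j$ to $q$. When we build $H_i(S_j)$, we query the static structure for $S_j$ with parameter $k_i = k_1 2^{i-1}$, obtaining the set $N_i$ of the $k_i$ closest sites. Since distances are totally ordered (breaking ties consistently, say by site identifier), $N_{i-1} \subseteq N_i$. By definition $H_i(S_j) = N_i \setminus (\text{already encountered sites}) = N_i \setminus N_{i-1}$, which has size $k_i - k_{i-1} = k_1 2^{i-2}$, and the union $H_1(S_j)\cup\dots\cup H_i(S_j) = N_i$ has the claimed meaning, completing the induction.

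The only delicate point is to justify that the sites previously encountered while building earlier subheaps are exactly $N_{i-1}$, rather than some larger set. This follows from the fact that the sets $S_1,\dots,S_t$ are pairwise disjoint, so a site $s\in S_j$ can only be returned from queries to the static data structure of $S_j$, and from the inductive hypothesis these prior queries returned exactly $N_1 \subseteq N_2 \subseteq \dots \subseteq N_{i-1}$. A secondary subtlety, tie-breaking between sites at equal distance from $q$, is harmless provided the static $k'$-NN data structure uses a fixed deterministic rule so that repeated queries on $S_j$ with parameters $k_1 < k_2 < \dots$ return nested answer sets; this can always be arranged, for instance by breaking ties on a secondary key such as site index.
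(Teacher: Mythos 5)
Your proof is correct and takes essentially the same route as the paper: both argue that the sites already encountered when building $H_i(S_j)$ are exactly the $k_{i-1}=k_12^{i-2}$ sites returned by the previous query, so $|H_i(S_j)| = k_i - k_{i-1} = k_12^{i-2}$. You merely make explicit (via induction and a remark on consistent tie-breaking) the nestedness of successive query answers that the paper's shorter proof takes for granted.
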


\begin{proof}
    To create $H_i(S_j)$, we query the static data structure of $S_j$ to find the $k_1 2^{i-1}$ sites closest to $q$. Of these sites, only the ones that have not been included in any of the lower level subheaps are included in $H_i(S_j)$. The sites previously encountered are exactly the $k_1 2^{i-2}$ sites returned in the previous query. It follows that $|H_i(S_j)| = k_1 (2^{i-1} - 2^{i-2})= k_1 2^{i-2}$. 
\end{proof}

Building $H_i(S_j)$ takes $O(Q(n)+k_iT)$ time. To pay
for this, we charge $O(T)$ to each element of $H_{i-1}(S_j)$. Because
we choose $k_1 = Q(n)/T$, Lemma~\ref{lem:sizeHi} implies that
$|H_{i-1}(S_j)| = \Omega(Q(n)/T)$,  and that
$k_i = k_1 2^{i-1} = 2^2 k_1 2^{i-3} = O(|H_{i-1}(S_j)|)$. 
Note that the heap $H_i(S_j)$, $i > 1$, is only built when the final
element of $H_{i-1}(S_j)$ is visited.  Thus, we only charge elements
of the heaps of which all elements have been visited (shown blue in
Figure \ref{fig:frederickson}). In total, $O(k)$ elements (not in
$H_0$) are visited, so the total size of these subheaps is
$O(k)$. From this, and the fact that all subheaps are disjoint, it
follows that we charge $O(T)$ to only $O(k)$ sites. The total running
time thus becomes $O(t)  + O(Q(n)t) + O(kT) = O(Q(n)t + kT)$.
This completes the proof of Theorem~\ref{thm:tsets-knn}.
\end{proof}

\section{An insertion-only data structure}
\label{sec:insertion-only}

We describe a method that transforms a static $k$-NN data structure
with query time $O(Q(n) + k)$ into an insertion-only $k$-NN data
structure with query time $O(Q(n)\log n + k)$. Insertions take
$O((P(n)/n)\log n)$ time, where $P(n)$ is the preprocessing time of
the static data structure, and $C(n)$ is its space usage. We assume
$Q(n)$, $P(n)$, and $C(n)$ are non-decreasing.

To support insertions, we use the logarithmic
method~\cite{Overmars83}. We partition the sites into $O(\log n)$
groups $S_1,..,S_{O(\log n)}$ with $|S_i| = 2^i$ for
$i \in \{1,..,O(\log n)\}$. To insert a site $s$, a new group
containing only $s$ is created. When there are two groups of size
$2^i$, these are removed and a new group of size $2^{i+1}$ is
created. For each group we store the sites in the static $k$-NN data
structure. This results in an amortized insertion time of
$O((P(n)/n)\log n)$. This bound can also be made
worst-case~\cite{Overmars83}. The main remaining issue is then how to support queries in $O(Q(n)\log n + k)$ time, thus avoiding an
$O(k\log n)$ term in the query time. 
Applying Theorem \ref{thm:tsets-knn} directly solves this problem, and we thus obtain the following result.

\begin{theorem}\label{thm:insertionsonly-ds}
  Let $S$ be a set of $n$ point sites, and let $\D$ be a static $k$-NN
  data structure of size $O(C(n))$, that can be built in $O(P(n))$
  time, and answer queries in $O(Q(n) + k)$ time. There is an
  insertion-only $k$-NN data structure on $S$ of size $O(C(n))$ that
  supports queries in $O(Q(n)\log n + k)$ time. Inserting a new site
  in $S$ takes $O((P(n)/n)\log n)$ time.
\end{theorem}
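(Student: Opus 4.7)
The plan is to follow the logarithmic method skeleton already sketched in the paragraph above the theorem, and invoke Lemma~\ref{lem:tsets-knn} to get the query bound. First I would set up the decomposition: maintain a partition of $S$ into at most $t = O(\log n)$ groups $S_1,\dots,S_t$ with $|S_i| \in \{0, 2^i\}$, each group stored in its own instance of the static structure $\D$. An insertion of a new site creates a singleton group, and whenever two groups of size $2^i$ coexist they are discarded and rebuilt as a single group of size $2^{i+1}$, recursively cascading as necessary.

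For the insertion cost, I would use the standard logarithmic-method charging argument: a site participates in at most $O(\log n)$ rebuilds, and a rebuild of a group of size $m$ costs $O(P(m))$. Since $P$ is non-decreasing, the amortized cost per insertion is $O((P(n)/n)\log n)$; I would then quote~\cite{Overmars83} to make this bound worst-case if desired. The space bound $O(C(n))$ follows because the group sizes are disjoint portions of $S$ and $C$ is non-decreasing, so $\sum_i C(|S_i|) \le C(n)$ (up to a constant, using that $C$ is at least linear, or else summing directly).

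For queries, the core step is invoking Lemma~\ref{lem:tsets-knn} with the $t = O(\log n)$ groups and the static $k'$-NN structure $\D$ on each group, whose query time is $O(Q(|S_i|) + k') \le O(Q(n) + k')$ since $Q$ is non-decreasing. The lemma then yields a $k$-NN query on $\bigcup_i S_i = S$ in time $O(Q(n)\cdot t + k) = O(Q(n)\log n + k)$, exactly as required. This is the step that converts the naive $O((Q(n)+k)\log n)$ bound one would get by naively querying each group into the claimed bound, and it is the only non-routine part of the argument; everything else is bookkeeping for the logarithmic method.

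I do not anticipate any real obstacle: the only subtlety is verifying that Lemma~\ref{lem:tsets-knn} applies even though the group sizes $|S_i|$ vary, but since $Q$ is non-decreasing we can safely upper bound each $Q(|S_i|)$ by $Q(n)$, and the lemma's space bound $\sum_i C(|S_i|)$ telescopes to $O(C(n))$ by the same monotonicity assumption on $C$. Combining the three observations (insertion cost via the logarithmic method, space bound via monotonicity of $C$, and query time via Lemma~\ref{lem:tsets-knn}) immediately yields the theorem.
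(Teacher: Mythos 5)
Your proposal is correct and matches the paper's proof: the paper likewise applies the logarithmic method of~\cite{Overmars83} to obtain $O(\log n)$ groups with $O((P(n)/n)\log n)$ insertion time, and then invokes Lemma~\ref{lem:tsets-knn} directly to get the $O(Q(n)\log n + k)$ query bound. Your extra remarks on monotonicity of $Q$, $P$, and $C$ are consistent with the assumptions the paper states just before the theorem.
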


\subsection{Points in the plane}
In the Euclidean metric, $k$-nearest neighbors queries in the plane can be answered in $O(\log n + k)$ time, using $O(n)$ space and $O(n\log n)$ preprocessing time \cite{Afshani09,Chan16}. Hence:

\begin{corollary}
    There is an insertion-only data structure of size $O(n)$ that stores a set of $n$ sites in $\mathbb{R}^2$, allows for $k$-NNs queries in $O(\log^2 n + k)$ time, and insertions in $O(\log^2 n)$ time.
  \end{corollary}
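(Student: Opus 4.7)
The plan is to apply Theorem~\ref{thm:insertionsonly-ds} as a black box, instantiating it with the best known static $k$-NN data structure for the Euclidean plane. Specifically, the results of Afshani and Chan~\cite{Afshani09,Chan16} provide a static $k$-NN structure on $n$ sites in $\mathbb{R}^2$ with space $C(n)=O(n)$, preprocessing time $P(n)=O(n\log n)$, and query time $O(Q(n)+k)=O(\log n + k)$. All three of these functions are non-decreasing, so they satisfy the hypotheses of the theorem.

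I would then simply substitute these bounds into the conclusion of Theorem~\ref{thm:insertionsonly-ds}. The resulting space bound is $O(C(n))=O(n)$. The query time becomes $O(Q(n)\log n + k) = O(\log n \cdot \log n + k) = O(\log^2 n + k)$, and the amortized insertion time is $O((P(n)/n)\log n) = O(((n\log n)/n)\log n) = O(\log^2 n)$. By the standard de-amortization of the logarithmic method~\cite{Overmars83} cited in the preceding discussion, this insertion bound can even be made worst-case, though the corollary only claims the amortized version.

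There is no real obstacle here: the corollary is a direct instantiation. The only thing to double-check is that the static Afshani--Chan structure indeed has the interface required by Theorem~\ref{thm:insertionsonly-ds} (a $k'$-NN query returning the $k'$ closest sites within the claimed bound for any $k'$, not just a fixed $k$), which it does, since it supports reporting in $O(\log n + k')$ output-sensitive time for any desired $k'$. This is the property the query algorithm of Section~\ref{sec:query_procedure} relies on when it invokes the substructure on doubling values $k_1, k_2, \ldots$
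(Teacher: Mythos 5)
Your proposal is correct and matches the paper's own argument exactly: the paper derives this corollary by plugging the Afshani--Chan static structure (linear space, $O(n\log n)$ preprocessing, $O(\log n + k)$ query) into Theorem~\ref{thm:insertionsonly-ds}. Your extra remark about the substructure supporting queries for arbitrary $k'$ is a valid sanity check but not something the paper spells out.
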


If we increase the size of each group in the logarithmic method to
  $b^i$, with $b = \log^\eps n$ and $\eps > 0$, we get only
  $O(\log_b n)$ groups instead of $O(\log n)$. This reduces the query
  time to $O(\log^2 n/ \log \log n + k)$, matching the fully dynamic
  data structure. However, this also increases the insertion time to
  $O(\log^{2 + \eps}n/\log \log n)$. 
  For general constant-complexity
  distance functions, we achieve the same query time using Liu's data
  structure~\cite{LiuJournal}. The space usage is $O(n\log \log
  n)$ and the expected insertion time is $O(\log^2 n)$.

\subsection{Points in a simple polygon}\label{sec:geodesicknn_insertion_only}
In
the geodesic $k$-nearest neighbors problem, $S$ is a set of sites
inside a simple polygon $\mathcal{P}$ with $m$ vertices. For any two points $p$
and $q$ the distance $d(p,q)$ is defined as the length of the shortest
path $\pi(p,q)$ between $p$ and $q$ fully contained within $\mathcal{P}$. The input polygon
$\mathcal{P}$ can be preprocessed in $O(m)$ time so that the geodesic distance $d(p,q)$ between any two
points $p,q \in \mathcal{P}$ can be computed in $O(\log m)$
time~\cite{Guibas89}.

To apply Theorem~\ref{thm:insertionsonly-ds}, we need a
static data structure for geodesic $k$-NN queries. We can build such a
data structure by combining the approach of Chan~\cite{Chan00} and
Agarwal \etal~\cite{Staals18}. The data structure consists of a
hierarchy of lower envelopes of random samples
$R_0 \subset R_1 \subset .. \subset R_{\log n}$. For each sample, we
store a (topological) vertical decomposition of the downward
projection of the lower envelope, and the conflict lists of the corresponding
pseudo-prisms. The downward projection is a geodesic Voronoi diagram, which can be preprocessed for
efficient point location queries using the method of Oh and
Ahn~\cite{Oh20}. From a Clarkson and Shor style sampling argument, it
follows that total expected size of all conflict lists of one sample
is $O(n)$. Thus, the space of the resulting data structure is
$O(n\log n)$ in expectation. We can then find a prism in one
of the vertical decompositions that contains the query point and whose
conflict list has size $O(k)$ in $O(\log(n+m) + k\log m)$
time~\cite{Chan00}. This allows us to answer $k$-NN queries in the
same time. The crux in this approach is in how to compute the conflict
lists. We can naively compute these in $O(mn)$ time by
explicitly constructing the geodesic distance function for each site,
and intersecting it with each of the $O(n)$ pseudo-prisms. It is unclear how
to improve on this bound.

\begin{theorem}\label{thm:static-geodesic-knn}
    Let $S$ be a set of $n$ sites in a simple polygon \P with $m$ vertices. In $O(n(\log n \log^2 m + \log^3 m))$ time we can build a data structure of size $O(n\log n\log m)$, excluding the size of the polygon, that can answer $k$-NN queries with respect to $S$ in $O(\log(n+m)\log m + k\log m)$ time.
\end{theorem}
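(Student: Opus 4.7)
The plan is to realise the random-sample hierarchy sketched above inside a balanced geodesic decomposition tree $T$ of $\mathcal{P}$, trading the $O(mn)$ cost of building conflict lists naively for the polylogarithmic-in-$m$ cost obtained via the machinery of Agarwal~\etal~\cite{Staals18}. At each internal node $\nu$ of $T$, with splitting diagonal $d_\nu$ and sites $S_\nu = S_\nu^\ell \cup S_\nu^r$, I would build two cross-diagonal $k$-NN substructures: $D_\nu^{\ell}$ answers queries for points in $\mathcal{P}_\nu^\ell$ against the sites $S_\nu^r$, and $D_\nu^{r}$ is symmetric. Each substructure is a Chan~\cite{Chan00}-style hierarchy of $O(\log |S_\nu|)$ random samples; for each sample we store the lower envelope of the corresponding geodesic distance functions, a vertical decomposition preprocessed for point location using Oh and Ahn~\cite{Oh20}, and the conflict lists of the pseudo-prisms.

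To answer a query at $q$, I walk from the root of $T$ to the leaf containing $q$. At every node on this path $q$ lies in exactly one side of $d_\nu$, so exactly one of $D_\nu^\ell, D_\nu^r$ applies; the sets of sites covered along the path partition $S$ apart from the $O(1)$ sites in $q$'s leaf, which I check directly. Feeding the $O(\log m)$ substructures into Lemma~\ref{lem:tsets-knn}, with distance-evaluation cost $O(\log m)$ in place of the standard $O(1)$, yields the claimed $O(\log(n+m)\log m + k\log m)$ query time; the $\log(n+m)$ absorbs point location inside each vertical decomposition, and the second term comes from the $kT$ contribution in Section~\ref{sec:query_procedure} with $T = O(\log m)$.

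The central construction step is to build the sampled lower envelopes and their conflict lists. The key observation from \cite{Staals18} is that because any shortest path from $q\in\mathcal{P}_\nu^\ell$ to $s\in S_\nu^r$ must cross $d_\nu$, the geodesic distance $d(q,s)$ decomposes as $\min_{x\in d_\nu}(d(q,x) + d(x,s))$, i.e.\ it behaves as an additively weighted distance anchored on the chord. Reusing their geodesic lower-envelope construction on $d_\nu$ lets us compute one sample level of the hierarchy, including its conflict lists, in polylogarithmic-in-$m$ amortised time per site. A Clarkson--Shor argument gives expected total conflict-list size $O(|S_\nu|)$ per sample level, hence expected substructure size $O(|S_\nu|\log|S_\nu|)$. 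Since each site of $S$ appears in exactly one substructure per level of $T$ and $T$ has depth $O(\log m)$, summing yields the claimed space $O(n\log n\log m)$ and, after careful accounting of the per-site polylogarithmic cost of the Agarwal~\etal\ machinery, the claimed preprocessing time $O(n(\log n\log^2 m + \log^3 m))$.

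The main obstacle will be verifying that the Agarwal~\etal\ construction, designed for a single geodesic lower envelope, extends cleanly to every random-sample level so that the $\log m$ factors stay within budget and the conflict lists can be extracted at the same asymptotic cost per site. The query-time bound and the use of Lemma~\ref{lem:tsets-knn} are, by comparison, routine.
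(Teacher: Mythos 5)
Your overall architecture matches the paper's: the balanced diagonal decomposition of $\mathcal{P}$ with cross-diagonal substructures, a Chan-style hierarchy of random samples with lower envelopes, vertical decompositions and conflict lists per sample, and a root-to-leaf query feeding $O(\log m)$ disjoint substructures into Lemma~\ref{lem:tsets-knn} with distance-evaluation cost $T=O(\log m)$. On the construction side, the enabling fact is slightly different from how you phrase it: what makes the conflict lists computable is not the additively-weighted reformulation on the chord per se, but that for \emph{every} subset $S'\subseteq S_\ell$ the diagonal $d$ crosses each Voronoi region exactly once, so the restricted diagram is a Hamiltonian abstract Voronoi diagram; this is what lets one compute the conflict list of a pseudo-prism from the functions intersecting its corners, and it is exactly Theorem~22 of \cite{Staals18}, giving $O(n(\log n\log m+\log^2 m))$ construction per level. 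Your ``main obstacle'' is therefore already resolved in the cited work, and your space/time accounting is consistent with the paper's.

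The genuine gap is in the query bound. Theorem~22 of \cite{Staals18} gives $O((\log n+k)\log m)$ per substructure, so combining $O(\log m)$ substructures yields $O(\log n\log^2 m+k\log m)$, not the claimed $O(\log(n+m)\log m+k\log m)$. You assert that Oh and Ahn's method gives $O(\log(n+m))$ point location ``inside each vertical decomposition,'' but their structure locates the \emph{Voronoi region} of $q$, not the pseudo-trapezoid of the topological vertical decomposition of the sampled lower envelope, and the latter is what the conflict lists are attached to. The paper spends the second half of its proof adapting Oh and Ahn: each bisector is approximated separately between consecutive vertices of the vertical decomposition, and when the standard anchor point $t_i$ (the junction of $\pi(s_i,u)$ and $\pi(s_i,v)$) falls outside the trapezoid, it is replaced by the intersection of $\pi(s_i,v)$ with the trapezoid's left wall so that the geodesic convex hull is still bounded by at most four concave chains; the query then uses an upward ray shot, a second ray shot opposite the last edge of $\pi(s,q)$, and a distance comparison to recover the exact trapezoid. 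Without this adaptation (or some substitute) your argument only establishes the weaker $O(\log n\log^2 m+k\log m)$ query time.
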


\begin{proof}
To circumvent the issue we describe above, we partition $\mathcal{P}$ into two subpolygons $\mathcal{P}_r$ and $\mathcal{P}_\ell$ of roughly the same size by a diagonal $d$. Let $S_r$ and $S_\ell$ denote the sites in $\mathcal{P}_r$ and $\mathcal{P}_\ell$, respectively. We then build a data structure that can find the $k$-nearest neighbors of any query point $q \in \P_r$ among the sites in $S_\ell$ efficiently. To allow us to answer general queries, i.e. to find the $k$-nearest neighbors for \emph{any} $q \in \P$ among \emph{all} sites in $S$, we build such a data structure on both $\P_r$ and $\P_\ell$, and then recursively partition the polygon further. This results in a decomposition of the polygon of $O(\log m)$ levels.

\begin{figure}
    \centering
    \includegraphics{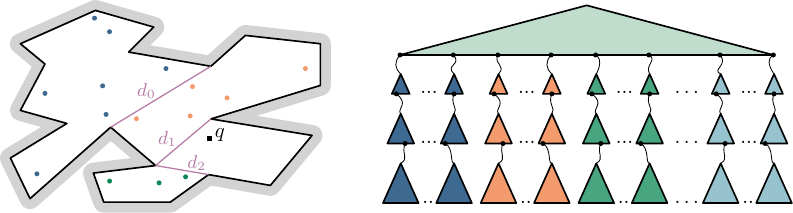}
    \caption{A partial decomposition of $\mathcal{P}$ and the corresponding heap
      used in a $k$-NN query for $q$.
    }
    \label{fig:geodesic-heap}
\end{figure}

To answer $k$-NN queries for a point in $\P_r$ among the sites in $S_\ell$, we consider the Voronoi diagram of sites in $S_\ell$ (resp. $S_r$) restricted to
$\mathcal{P}_r$. The bisector between any two sites in $S_\ell$ restricted to $\P_r$ is $x$-monotone with respect to the vertical diagonal $d$. Agarwal \etal use this fact to construct an
efficient data structure (Theorem 22 of \cite{Staals18}) for our
problem. This is essentially the data structure that was described at
the start of this section. However, because the bisectors are $x$-monotone, we can efficiently compute the conflict
lists by only considering the functions intersecting the corners of
each prism. Because of the monotonicity, a bisector cannot enter and exit again through the same edge of a prism. Therefore, the conflict list of a prism is simply the union of the conflict lists of its corners (Lemma 15 of~\cite{Staals18}). For each function $f$, we compute the vertices it conflicts with by first finding a vertex $v$ on $d$ that conflicts with $f$, and then performing a breadth first search to find \emph{all} vertices (corners) it conflicts with.
Thus, we can build the data structure in $O(n(\log n \log m + \log^2 m))$
time (see \cite{Staals18} for details). The data structure requires $O(n\log n)$ space, excluding the size of
the polygon, and it allows us to find the $k$-nearest neighbors among $S_\ell$ for a point $q \in \mathcal{P}_r$ in $O((\log n + k)\log m)$ expected time~\cite{Staals18}. 

To answer a $k$-NN query in $\P$, we first check whether the query point $q$ lies in $\P_r$ or $\P_\ell$. When $q$ lies in $\P_r$ (resp. $\P_\ell$), we query the data structure on $S_\ell$ to find the $k$-nearest neighbors among this set. To find the $k$-nearest neighbors among $S_r$, we recursively query $\P_r$. At each level of the decomposition, we thus consider a data structure on a single set of sites ($S_\ell$ or $S_r$) (see Figure~\ref{fig:geodesic-heap}). In the next paragraph, we describe how to improve the query time of the static data structure, which can answer $k$-NN queries for a query point in $\P_r$ among sites in $S_\ell$, to $O(\log(n+m) + k\log m)$ time. It follows that, using our technique from Section~\ref{sec:query_procedure}, we can find the $k$-NN in $S$, which are spread over $O(\log m)$ of these data structures, in $O(\log(n+m)\log m + k\log m)$ time.

\subparagraph{Improving the query time.} The query time of the static data structure is determined by two factors: a point location query in the topological vertical decomposition of a geodesic Voronoi diagram, this takes $O(\log n \log m)$ time, and $O(k)$ distance queries that take $O(\log m)$ time.
We can improve the point location time to $O(\log(n+m))$ by
incorporating the idea of Oh and Ahn~\cite{Oh20} to approximate a
geodesic Voronoi diagram by a polygonal subdivision.
Given the exact location of the degree-1 and degree-3 vertices of the Voronoi diagram, they approximate each common boundary of two Voronoi regions by connecting the two endpoints using at most three line segments. This allows them to find the (not-approximated) Voronoi region that contains a query point $q$ in $O(\log(n + m))$ time, using $O(n \log(n + m))$ preprocessing time. In our case, we want to find the (pseudo-)prism that contains the query point. This corresponds to finding the (pseudo-)trapezoid, which is the downward projection of the prism, of the (topological) vertical decomposition of a geodesic Voronoi diagram that contains $q$. Thus, we need to slightly adapt their approach to not only find the Voronoi region of the query point $q$, but also the exact trapezoid containing $q$.

Instead of approximating the bisector connecting two Voronoi vertices, we approximate each part of the bisector between two vertices of the vertical decomposition separately, see Figure~\ref{fig:approximate_voronoi}. To approximate a bisector of sites $s_1$ and $s_2$ between vertices $u,v$, Oh and Ahn first find two points $t_1,t_2$ such that the geodesic convex hull of $t_1,t_2,u,v$ is contained in the Voronoi regions of $s_1$ and $s_2$, and the boundary of this convex hull consists of at most four maximal concave polygonal curves. They then approximate the bisector by either the line segment $uv$, when this is contained in the convex hull, or a polygonal curve consisting of extensions of the edges incident to $u$ and $v$ of $\pi(u,v)$, and a line segment tangent to $\pi(t_1,v)$ and $\pi(t_2,u)$ that connects them.

We now consider the vertical decomposition of our Voronoi diagram. We want to approximate a bisector of sites $s_1$ and $s_2$ between two vertices $u,v$ of the vertical decomposition. Without loss of generality, let $v$ be the vertex furthest from $d$. Oh and Ahn choose $t_i$ as the junction of $\pi(s_i,u)$ and $\pi(s_i,v)$. When choosing the $t_i$'s like this, it could be that either of the $t_i$'s is not contained in the trapezoid whose boundary we are approximating. In this case, we instead choose $t_i$ as the intersection between $\pi(s_i,v)$ and the left line segment defining the trapezoid, see Figure \ref{fig:approximate-bisector}. As $\pi(s_i,v)$ is contained in the Voronoi region of $s_i$, this intersection point indeed exists. Note that this also ensures the convex hull is still bound by at most four maximal concave curves. Thus, we can use we can use the algorithm of Oh and Ahn~\cite{Oh20} to approximate the bisector within the convex hull.
\begin{figure}
    \begin{minipage}{0.48\textwidth}
    \centering
    \includegraphics{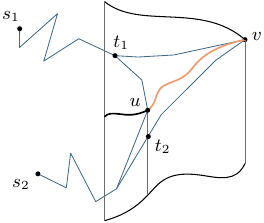}
    \vspace{0.6cm}
    \caption{The four convex chains between $u,t_1,v,t_2$ are used to approximate the orange bisector between $u$ and $v$.}
    \label{fig:approximate-bisector}
    \end{minipage}
    \hfill
    \begin{minipage}{0.48\textwidth}
      \centering
      \includegraphics{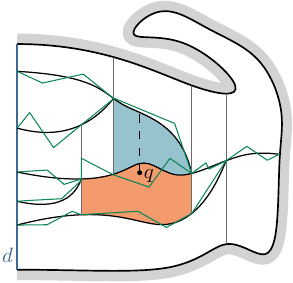}
      \caption{Approximation (in green) of the Voronoi diagram in $\mathcal{P}_r$. To find the trapezoid containing $q$, we consider both colored trapezoids.}
      \label{fig:approximate_voronoi}
    \end{minipage}
  \end{figure}
  
To find the trapezoid containing a query point $q$, we use the query algorithm of Oh and Ahn~\cite{Oh20}. The query is performed as follows: first, we find the approximated trapezoid containing $q$ by shooting a ray upwards from $q$, and
determining what segment in the approximated Voronoi diagram (or the
polygon boundary) is hit. Let $s$ be the site whose Voronoi region we find. Suppose this is not the real trapezoid containing $q$, then $q$ lies in a region bounded by part of a bisector of $s$ and some other site $t$, and the approximation of that bisector, see Figure \ref{fig:approximate_voronoi}. To find this approximated bisector (and thus the trapezoid in the Voronoi region of $t$ containing $q$), we shoot a ray from $q$ in direction opposite to the edge of $\pi(s,q)$ incident to $q$. This does not intersect the real bisector, and thus intersects the approximated bisector of $s$ and $t$. Finally, we compare the distance from $q$ to the two sites to find the trapezoid of the vertical decomposition containing $q$.
\end{proof}

Applying Theorem \ref{thm:tsets-knn} to the data structure of Theorem \ref{thm:static-geodesic-knn} proves the following corollary.

\begin{corollary}
Let $\mathcal{P}$ be a simple polygon with $m$ vertices. There is an insertion-only data structure of size $O(n \log n \log m + m)$ that stores a set of $n$ point sites in $\mathcal{P}$, allows for geodesic $k$-NN queries in $O(\log(n+m)\log n \log m + k \log m)$ expected time, and inserting a site in $O(\log^2 n \log^2 m + \log n \log^3 m)$ time.
\end{corollary}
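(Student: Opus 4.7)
The plan is to plug the static geodesic $k$-NN structure of Theorem~\ref{thm:static-geodesic-knn} into the generic insertion-only transformation behind Theorem~\ref{thm:insertionsonly-ds}, with the one twist that geodesic distance evaluations cost $T=O(\log m)$ rather than $O(1)$. I first apply the logarithmic method to partition the current site set $S$ into $O(\log n)$ groups $S_1,\dots,S_{O(\log n)}$ of geometrically increasing size, and store each $S_i$ in an instance of the static structure from Theorem~\ref{thm:static-geodesic-knn}. Summing the static bounds yields total size $O(n\log n\log m+m)$, where the $+m$ term accounts for the single shared copy of the preprocessed polygon used by all groups.

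To answer a $k$-NN query I invoke the multi-structure query technique of Section~\ref{sec:query_procedure} across the $t=O(\log n)$ substructures, taking $Q(n)=O(\log(n+m)\log m)$ from Theorem~\ref{thm:static-geodesic-knn}. Because the heap keys are geodesic distances, I use the variant from the last paragraph of Section~\ref{sub:Query_analysis}: set $k_1=Q(n)/T$ and charge $O(T)=O(\log m)$ to each site of $H_{i-1}(S_j)$ when expanding into $H_i(S_j)$. This gives query time $O(Q(n)\,t+kT)=O(\log(n+m)\log n\log m+k\log m)$ in expectation (the expectation comes from the random-sampling-based static structure). The distance $d(q,s)$ needed for the heap keys of the $O(k_1)$ initial sites returned by each substructure is obtained in $O(\log m)$ time per site using the $O(m)$-time preprocessing of Guibas and Hershberger, so the key computations are subsumed in the above bound.

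For insertions I use the standard amortized logarithmic method: a new site creates a singleton group, and whenever two groups of equal size collide they are rebuilt into a single group using the static construction of Theorem~\ref{thm:static-geodesic-knn}. With $P(n)=O(n(\log n\log^2 m+\log^3 m))$, the amortized cost per insertion is
\[
  O\!\left(\frac{P(n)}{n}\log n\right)=O(\log^2 n\log^2 m+\log n\log^3 m),
\]
matching the claimed bound. I do not need any worst-case variant of the logarithmic method here, since the statement only claims amortized insertion time.

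The only mildly subtle point is the handling of the $O(\log m)$ distance-evaluation cost in the heap-selection query; this is precisely what the final remark of Section~\ref{sub:Query_analysis} is designed for, so no new analysis is required. Everything else is a direct substitution of the parameters $C(n)$, $P(n)$, and $Q(n)$ of Theorem~\ref{thm:static-geodesic-knn} into the insertion-only template of Theorem~\ref{thm:insertionsonly-ds} (equivalently, Lemma~\ref{lem:tsets-knn} combined with logarithmic-method bookkeeping).
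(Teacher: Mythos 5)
Your proposal is correct and follows exactly the paper's route: the paper proves this corollary in one line by applying Lemma~\ref{lem:tsets-knn} (equivalently, the logarithmic-method template of Theorem~\ref{thm:insertionsonly-ds}) to the static structure of Theorem~\ref{thm:static-geodesic-knn}, with the same parameter substitutions you make. Your explicit treatment of the $O(\log m)$ distance-evaluation cost via the final remark of Section~\ref{sub:Query_analysis} is a welcome elaboration of a detail the paper leaves implicit, but it is not a different approach.
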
 

\section{A fully dynamic data structure}
\label{sec:Fully_Dynamic}
In this section, we develop a $k$-NN data structure that supports both insertions and deletions, building on the results of
Chan~\cite{Chan12-kNN}. The data structure we propose is an extension of the nearest neighbor data structure of Lemma~\ref{lem:kaplan}, and achieves the following result.\vspace{1cm}

\begin{restatable}{theorem}{thmkNNgeneral}
  \label{thm:k-NN_general}
    Given an algorithm that constructs a $k$-shallow cutting of size $S(n,k)$ on $n$ functions in $T(n,k)$ time, such that the prism containing a query point can be located in $Q(n,k)$ time, we can construct a data structure that dynamically maintains a set of at most $n$ functions $F$ for $k$-NN queries. The size of the data structure is $O(S(n,1)^2/n \cdot \log n\log\log n)$, and operations are performed in time:
    
    \vspace{0.5em} \hspace*{-1.5em}
    \begin{tabular}{ll}
        Query: & $O(Q(n,1)\log n/ \log\log n + k)$ \\
        Insertion: & $O((S(n,1)T(n,1)/n^2)\log^{2+\eps}n)^*$\\
        Deletion: & $O((T(n,1)/\log \log n + T(n,n/\log n)) \cdot S(n,1)^2\log^{4+\eps}n /n^3)^*$\\
        & \hspace{10.3cm}\small{$^*$amortized}
    \end{tabular}
\end{restatable}

In Section~\ref{sub:Chan-k-NN}, we first describe the $k$-NN data structure for planes by Chan~\cite{Chan12-kNN}, and then fill in the part missing from
Chan's query algorithm. This data structure forms the basis for our general $k$-NN data structure. In Section~\ref{sub:Bootstrapping}, we then discuss a simple deletion-only
$k$-NN structure. This deletion-only structure allows us to
adapt Chan's $k$-NN data structure
to more general distance functions like the geodesic distance. In Section~\ref{sub:improved-k-NN}, we analyze the running times of our general $k$-NN data structure with respect to the Euclidan distance, as this is somewhat easier to follow. In Section~\ref{sub:general_dynamic_k-nn}, we prove the general running times with respect to any vertical shallow cuttings as stated in Theorem~\ref{thm:k-NN_general}. Finally, in Section~\ref{sec:fully_dynamic_applications}, we apply Theorem~\ref{thm:k-NN_general} to the case of general constant-complexity distance functions and the geodesic distance.

\subsection{A dynamic $k$-NN data structure for planes}
\label{sub:Chan-k-NN}

Chan \cite{Chan12-kNN} describes how to adjust his original 1-NN data
structure to efficiently perform $k$-NN queries. We denote this (adjusted) data
structure by $\D$. There are two main changes in this data structure:
the conflict lists are stored in $k$-NN data structures, and the
number of towers is reduced by using $b = \log^\eps n$. Only the \textit{live} planes of the conflict list of each prism $\nabla$ are stored in a data
structure $\D_0$ that uses linear space, can perform $k'$-NN (or $k'$-lowest planes) queries in $O(Q_0(|F_\nabla|) +
k')$ time, and deletions in $D_0(|F_\nabla|)$ time. A different data structure is used to store small and large conflict lists. After building a tower $\Ti$, each data structure $\D_0$ of a prism $\nabla$ is built on $F_\nabla \cap \Flive{i}$. As the $\D_0$ data structures use linear space, the space usage of the entire data structure is $O(n\log n)$.

Insertions are performed as in the original data structure, but the deletion of a plane $h$ requires more effort than in the original. 
In addition to increasing $d_{\nabla}$ for each prism containing $h$, $h$ is explicitly removed from the $\D_0$ data structures. Note that $\Ti$ for which $h \in \Flive{i}$ is the only tower whose $\D_0$ data structures contain $h$. When a prism in tower $\Ti$ is purged, we also delete its planes from the other $\D_0$ data structures in $\Ti$ to retain this property. This gives an amortized expected update time of $U(n) = O(\log^{6+\eps} n)$~\cite{Chan12-kNN}. The improvement of Kaplan \etal~\cite{Kaplan17} reduces this to $O(\log^{5+\eps}n)$, and the improvement of Chan~\cite{Chan19} makes it deterministic. In Section \ref{sub:improved-k-NN} we discuss this update time in more detail w.r.t. our adaption of the data structure.
It follows from Lemma~\ref{lem:properties} and the above modifications that:
\begin{lemma}[Chan~\cite{Chan12-kNN}]
Let $q$ be a query point. In $O(t\log n)$ time, we can find $t=O(\log_b n)$ prisms $\nabla_1,..,\nabla_t$, such that: (i) all prisms contain $q$, (ii) the conflict list of each prism has size $O(k)$, (iii) the conflict lists are pairwise disjoint and stored in a $\D_0$ data structure, and (iv) the $k$ sites in $S$ closest to $q$ appear in the union of the conflict lists of those prims.
\end{lemma}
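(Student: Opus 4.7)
The plan is to directly combine Lemma~\ref{lem:properties} with a point location query in each tower. For each tower $\tower{i}$, $i=1,\dots,t$, I would perform a point location query in the downward projection of the $k_{j_k}$-shallow cutting to find the prism $\nabla_i$ containing $q$, where $j_k = \lceil \log(Ck/n)\rceil$ is the level specified by Lemma~\ref{lem:properties}(5). Each such query costs $O(Q(n,k_{j_k})) = O(\log n)$, and since there are $t = O(\log_b n)$ towers, the total cost is $O(t \log n)$ as claimed.

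I would then verify the four properties in order. Property (i) holds by construction, since $\nabla_i$ is exactly the prism containing $q$ in $\Lambda_{k_{j_k}}$ of $\tower{i}$. Property (ii) is the defining property of a $k_{j_k}$-shallow cutting: every prism has a conflict list of size $O(k_{j_k}) = O(2^{j_k}k_0) = O(k)$. Property (iii) uses the fact that the $\D_0$ data structure attached to $\nabla_i$ stores $F_{\nabla_i} \cap \Flive{i}$ rather than the full conflict list; since the sets $\Flive{1},\dots,\Flive{t}$ form a partition of $F$ (maintained as an invariant under insertions, deletions, and purges), these pruned conflict lists are pairwise disjoint across towers. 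Property (iv) is precisely the content of Lemma~\ref{lem:properties}(5): considering the prisms containing $q$ in the shallow cuttings at level $j_k$ across all towers is sufficient to recover the $k$ nearest sites to $q$, and since every site lives in exactly one tower, a $k$-nearest site $h$ must appear in the live portion of the conflict list of some $\nabla_i$.

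The main subtlety I anticipate is how to handle purged prisms at level $j_k$. If $\nabla_i$ has been marked purged in $\tower{i}$, we cannot simply ignore it as in the $1$-NN case, since we would risk losing a portion of the $k$ nearest neighbors. The fix is to walk up the hierarchy within $\tower{i}$, locating the prism $\nabla_i'$ containing $q$ in the lowest-level shallow cutting at which it has not been purged. Because a prism at level $j$ is only purged after $\Omega(k_j)$ deletions, and because passing from one cutting to the next enlarges the conflict list by only a constant factor while the global rebuild scheme bounds how far up one has to walk, $|F_{\nabla_i'}| = O(k)$ is preserved; moreover, when a prism gets purged its planes are removed from all $\D_0$ structures of the same tower, so disjointness across towers is still maintained. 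Point location in the higher levels is again $O(\log n)$ per tower, so the overall $O(t \log n)$ bound survives, and all four properties hold for the selected prisms.
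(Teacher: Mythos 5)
Your main argument is correct and is essentially the paper's (largely implicit) reasoning: point location at level $j_k$ in each of the $t$ towers gives the time bound and properties (i) and (ii); disjointness (iii) follows because each $\D_0$ stores $F_{\nabla}\cap\Flive{i}$ and the sets $\Flive{1},\dots,\Flive{t}$ partition $F$; and (iv) is Chan's sufficiency property, Lemma~\ref{lem:properties}(5), combined with the fact that each live plane is found in the tower where it is live.

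The flaw is in your handling of purged prisms. The correct behaviour is the same as in the $1$-NN query: a purged prism is simply \emph{skipped}. When a prism $\nabla$ in $\tower{i}$ is purged, \emph{every} plane of $F_\nabla\cap\Flive{i}$ is reinserted and thereby migrates to some other $\Flive{i'}$ (and is removed from all $\D_0$ structures of $\tower{i}$); hence no plane relevant to the query remains live in $\tower{i}$ inside that prism, and each such plane will be recovered at level $j_k$ of its new home tower. Nothing is ``at risk of being lost,'' so your proposed fix of walking up the hierarchy within $\tower{i}$ is unnecessary — and, as justified, it does not work: the planes you are worried about have already been deleted from the higher-level $\D_0$ structures of $\tower{i}$ as well, so walking up recovers nothing; there is no bound on how many consecutive levels above $j_k$ may have been purged, so neither the claimed $O(k)$ conflict-list size nor the $O(\log n)$ per-tower point-location cost is guaranteed by your argument. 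Replacing that paragraph by ``purged prisms are skipped, which is safe because purging reinserts all live planes of the prism into other towers'' closes the gap.
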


So, to answer $k$-NN queries we can use a $k_i$-NN query on each $\D_0$ data structure of the prisms $\nabla_1,..,\nabla_t$, where $k_i$ is the number of sites from the $k$-nearest neighbours of $q$ that appear in the conflict list of $\nabla_i$. This takes $O(\sum_{i=1}^t Q_0(k) + k_i) = O(\sum_{i=1}^t Q_0(k) + k)$ time. However, it is unclear how to compute those $k_i$ values. Fortunately, we can use Theorem \ref{thm:tsets-knn} to find the $k$-nearest neighbors over all of the substructures in $O(Q_0(k)\log_b n + k)$ time. Plugging in the appropriate query time $Q_0(k)$ (see Chan~\cite{Chan12-kNN} and Section~\ref{sub:improved-k-NN}), this achieves a total query time of $O(\log^2 n/\log\log n + k)$ time as claimed.

\subsection{A simple deletion-only data structure}
\label{sub:Bootstrapping}

Let $H$ be a set of $n$ planes, and let $r \in \mathbb{N}$ be a
parameter. We develop a data structure that
supports deletions, and reporting the $t$ lowest planes above a query point $q \in \R^2$. For our application, we are mostly interested in the deletion time of the data structure, and less in the query time. By picking $r$ to be somewhat small, we can make deletions efficient at the cost of making the query time fairly terrible.

\begin{lemma}\label{lem:bootstrapping_ds}
    For any fixed $r$, we can construct a deletion-only data structure $\D_0$ of size $O(n\log r)$, or $O(n)$ when $n \geq r^{1/\eps}$, in $O(n\log n)$ time that stores a set of $n$ planes, allows for $t$-lowest planes queries in $O(\log r + n/r + t)$ time and deletions in $O(r\log n)$ amortized time. 
\end{lemma}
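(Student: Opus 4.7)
The plan is to take the data structure already described above: a hierarchy of $\ell = O(\log r)$ shallow cuttings $\Lambda_{k_0},\dots,\Lambda_{k_\ell}$ at levels $k_i = \lfloor 2^i(n/r)\rfloor$, together with their conflict lists. I would first verify the space bound by summing over the cuttings: $|\Lambda_{k_i}| = O(n/k_i) = O(r/2^i)$ and each conflict list has size $O(k_i)$, so each individual cutting contributes $O(n)$ to the size and the total is $O(n\log r)$. Using Chan's hierarchical shallow-cutting algorithm, the whole family together with its conflict lists can be computed in $O(n\log n)$ time.

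Next I would establish the deletion bound. A plane $h$ lies in $O(r/2^i)$ prisms of $\Lambda_{k_i}$, so removing $h$ from every conflict list costs $\sum_{i=0}^{\ell} O(r/2^i) = O(r)$. To prevent the conflict lists from degrading, I rebuild the structure from scratch as soon as some conflict list has lost half of its planes. Because even the smallest original conflict list has size $\Omega(n/r)$, at least $\Omega(n/r)$ deletions occur between rebuilds, so the $O(n\log n)$ rebuild cost amortises to $O(r\log n)$ per deletion, which dominates the $O(r)$ of explicit work.

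For queries I select the cutting whose conflict-list bound matches $t$, namely $\Lambda_{k_i}$ with $i = \lceil\log(Ctr/n)\rceil$ for a suitable constant $C$, falling back to $\Lambda_{k_0}$ when $t < n/r$. Locating the prism containing $q$ in the downward projection of this cutting, which has at most $O(r)$ cells, takes $O(\log r)$ time, after which a single linear scan of the conflict list of size $O(n/r + t)$ returns the $t$ lowest planes above $q$. This yields the claimed $O(\log r + n/r + t)$ query time.

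The least routine step, and the main obstacle I anticipate, is the $O(n)$-space variant for $n \geq r^{1/\eps}$. Here I would keep only the prism geometry of the cuttings and store the planes in an auxiliary halfspace range reporting data structure~\cite{Agarwal95} with $O(n)$ space, $O(n\log n)$ construction, $O(n^{1-\eps})$ query, and $O(\log n)$ deletion time. A query still locates the prism $\nabla$ in the appropriate cutting; then I query the range-reporting structure with the point on the roof of $\nabla$ directly above $q$, so that only the $O(t)$ planes below that point and above $q$ are retrieved, and the lowest $t$ among them are reported. The hypothesis $n \geq r^{1/\eps}$ is precisely what guarantees $n^{1-\eps} \le n/r$, so the range-reporting term is absorbed into the $O(n/r)$ in the query time. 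Deletions now cost $O(\log n)$ in the range structure in addition to the $O(r)$ conflict-list maintenance, and the same rebuild schedule keeps the amortised deletion time at $O(r\log n)$.
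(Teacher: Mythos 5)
Your proposal is correct and follows essentially the same route as the paper: the same hierarchy of $O(\log r)$ shallow cuttings with $k_i = \lfloor 2^i(n/r)\rfloor$, the same rebuild-after-half-deleted amortisation, the same level selection for queries, and the same halfspace-range-reporting substitution for the linear-space variant. The only cosmetic difference is that in the reduced-space version the paper drops the conflict lists entirely and triggers rebuilds by simply counting $n/(2r)$ deletions, so there is no per-deletion conflict-list maintenance left to do.
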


\begin{proof}
Our entire data structure consists of
just $\ell=O(\log r)$ $k_i$-shallow-cuttings $\Lambda_{k_0},..,\Lambda_{k_\ell}$ of the planes, for values $k_i = 2^i \lceil n/r \rceil$, for
$i=0,..,\ell$. Hence, this uses $O(n\log r)$ space in total. We can compute the shallow cuttings along with their conflict lists in $O(n\log n)$ time~\cite{Chan16}. Note that even when $r > n$, we only need to build $O(\log n)$ shallow cuttings.

\subparagraph{Deletions.}
If we delete a plane, we remove it from all conflict
lists in all cuttings. Since cutting $\Lambda_{k_i}$ has size $O(r/2^i)$,
each plane occurs at most $O(r/2^i)$ times. Hence, the total time to go
through all of these prisms is $\sum_{i=0}^{\log r} r/2^i = O(r)$ time. When more than half of the planes from any conflict list are removed, we rebuild the entire data structure. Because the smallest conflict list contains at least $n/r$ planes, at least $n/2r$ deletions take place before a global rebuild. We charge the $O(n\log n)$ cost of rebuilding to these
planes, so we charge $O(r\log n)$ to each deletion. Deletions thus
take amortized $O(r\log n)$ time.

\subparagraph{Queries.}
We report the $t$-lowest planes at a query point $q$ as follows. We consider the cutting for which $k_i = 2^i \lceil n/r \rceil = O(t)$, so at level $i = \lceil \log(Ctr/n) \rceil$, for some large enough constant $C$. When $t < n/r$, there is no such cutting, so we query the lowest level cutting instead. We find the prism containing $q$ by a point location query. As the largest cutting has size $O(r)$, this takes $O(\log r)$ time. We then simply report the $t$ lowest planes at $q$ by going through the entire conflict list. This results in a query time of $O(\log r + n/r + t)$.

\subparagraph{Reducing space usage.} When $n$ is large w.r.t. $r$, that is $n \geq r^{1/\eps}$, we can use a similar approach to Chan~\cite{Chan10, Chan12-kNN} to achieve linear space usage. Instead of storing the conflict lists explicitly, we only store the prisms of the shallow cuttings. Additionally, we store the planes in an auxiliary halfspace range reporting data structure~\cite{Agarwal95} with $O(n\log n)$ preprocessing time, $O(n^{1-\eps})$ query time, $O(\log n)$ deletion time, and linear space. This results in a linear space data structure. To delete a plane, we simply delete it in the auxiliary halfspace range reporting data structure in $O(\log n)$ time. After $\frac{n}{2r}$ deletions we rebuild the entire data structure. Thus the amortized deletion time remains $O(r \log n)$.

When performing a query, we first locate the prism $\nabla$ as before. We then query the halfspace range reporting data structure with the intersection point of the vertical line through $q$ and the roof of $\nabla$. We find the $t$ lowest planes by going through the $O(t)$ returned planes. This results in a query time of $O(n^{1-\eps} + \log r + n/r + t)$. Because $n \geq r^{1/\eps}$, we have $n^{1-\eps} \leq n/r$, thus the query time is $O(\log r + n/r + t)$.
\end{proof}

\subparagraph{General data structure.}
The data structure of Lemma~\ref{lem:bootstrapping_ds} can be applied to any type of functions for which we have an algorithm to compute vertical $k$-shallow cuttings. Refer to Section~\ref{sub:Shallow_cuttings} for the definitions regarding the computation of $k$-shallow cuttings. The data structure uses $O(S(n,1)\log r)$
space. Note that the ``lowest'' cutting we use is an $n/r$-shallow
cutting. It follows that constructing all shallow
cuttings takes $O(T(n,n/r)\log r)$ time. To delete a function, we remove all occurrences of the function from the conflict lists in 
\begin{equation*}
\sum_{i=0}^{\log r} S(n,k_i) = \sum_{i=0}^{\log r} \frac{1}{k_i}f(n) = \sum_{i=0}^{\log r} \frac{r}{2^i}f(n) = O((r/n)S(n,1))
\end{equation*}
time, where we used the assumption that $S(n,k) = \frac{1}{k}f(n)$, for some function $f(n)$. Additionally, we charge $O((r/n)T(n,n/r)\log r)$ to the deletion to pay for the global rebuild. To answer a query, we
simply find the prism containing $q$ in one cutting, so the query time is $O(Q(n,n/r) + n/r +
t)$. This results in the following general lemma.

\begin{lemma}\label{lem:bootstrapping_ds_general}
    For any fixed $r$, we can construct a deletion-only data structure of size $O(S(n,1) \log r)$ in $O(T(n,n/r)\log r)$ time that stores a set of $n$ functions, allows for $t$-lowest functions queries in $O(Q(n,n/r) + n/r + t)$ time and deletions in $O((r/n)(S(n,1) + T(n,n/r)\log r))$ amortized time. 
\end{lemma}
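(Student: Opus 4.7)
The plan is to generalize Lemma~\ref{lem:bootstrapping_ds} by replacing the concrete parameters for planes (size $O(n/k)$, build time $O(n\log n)$, point-location time $O(\log r)$) with the abstract parameters $S(n,k)$, $T(n,k)$, and $Q(n,k)$ provided by the shallow-cutting construction; the rest of the argument remains structurally identical. The key observation that enables the generalization is the assumption $S(n,k) = (n/k)f(n)$, which makes $S(n,k)\cdot k = nf(n)$ independent of $k$ and lets the relevant geometric sums telescope cleanly.

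First I would build, for $i = 0, \dots, \lfloor \log r\rfloor$, a $k_i$-shallow cutting $\Lambda_{k_i}$ with $k_i = \lfloor 2^i(n/r)\rfloor$, together with its conflict lists (skipping any $k_i = 0$ as in the plane case). Since $T$ is non-increasing in $k$, the most expensive cutting to construct is the coarsest one, $\Lambda_{n/r}$, costing $T(n,n/r)$, so the total preprocessing is $O(T(n,n/r)\log r)$. For space, cutting $\Lambda_{k_i}$ has $S(n,k_i) = S(n,1)/k_i$ prisms, each with a conflict list of size $O(k_i)$, so each cutting together with its conflict lists occupies $O(S(n,1))$ space independently of $i$, giving a total of $O(S(n,1)\log r)$.

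Next I would handle deletions exactly as in the plane case. When a function is deleted, I walk through every conflict list containing it and remove it. In $\Lambda_{k_i}$ the function appears in at most $S(n,k_i) = S(n,1)/k_i$ prisms, and summing the geometric series $\sum_{i=0}^{\lfloor\log r\rfloor} S(n,1)/k_i$ over $k_i = 2^i(n/r)$ yields $O((r/n)S(n,1))$ conflict-list work per deletion. Whenever some conflict list loses more than half its functions I rebuild the whole structure; since even the smallest conflict list initially has $\Omega(n/r)$ functions, this happens only after $\Omega(n/r)$ deletions, so the $O(T(n,n/r)\log r)$ rebuild cost amortizes to $O((r/n)T(n,n/r)\log r)$ per deletion. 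Adding the two contributions yields the claimed bound $O((r/n)(S(n,1) + T(n,n/r)\log r))$.

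For a query asking for the $t$ lowest functions at a point $q$, I would pick the cutting at level $i = \lceil\log(Ctr/n)\rceil$ for a sufficiently large constant $C$ (or the coarsest cutting if $t < n/r$), locate the prism containing $q$ in $Q(n,k_i) \leq Q(n,n/r)$ time, and scan its conflict list of size $O(\max\{t,n/r\})$, giving the stated $O(Q(n,n/r) + n/r + t)$ bound. I expect no serious obstacle: the argument is mostly bookkeeping over the plane proof. The main points to check carefully are that the geometric sum telescopes correctly under the abstract form of $S(n,k)$, that monotonicity of $T$ and $Q$ in $k$ really lets us bound everything by the coarsest-cutting parameters, and that the rebuild trigger still fires only after $\Omega(n/r)$ deletions because conflict lists in a $k$-shallow cutting have size $\Theta(k)$ by definition.
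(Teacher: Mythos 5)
Your proposal is correct and follows essentially the same route as the paper: it lifts the plane construction of Lemma~\ref{lem:bootstrapping_ds} verbatim to the abstract parameters, using that $T$ and $Q$ are non-increasing in $k$ to bound every cutting by the $n/r$-level one, that $S(n,k)=\tfrac{n}{k}f(n)$ makes each cutting plus conflict lists occupy $O(S(n,1))$ space and the per-deletion sum telescope to $O((r/n)S(n,1))$, and that conflict lists of size $\Theta(n/r)$ guarantee $\Omega(n/r)$ deletions between rebuilds. The only quibble is terminological (you call $\Lambda_{n/r}$ the ``coarsest'' cutting, whereas it is the lowest-level one with the most prisms), which does not affect the argument.
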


\subsection{A general dynamic $k$-NN data structure}
\label{sub:improved-k-NN}

In Section~\ref{sub:improved-k-NN} and~\ref{sub:general_dynamic_k-nn}, we generalize the dynamic $k$-NN data structure from Section~\ref{sub:Chan-k-NN} to support other types of distance functions.
First, in this section, we analyze the space usage and running times of the data structure for planes, as this is somewhat easier to follow. Next, in Section \ref{sub:general_dynamic_k-nn}, we analyze these for the general data structure.
\begin{lemma}
    There is a fully dynamic data structure of size $O(n)$ that stores a set of $n$ point sites and allows for planar $k$-nearest neighbors queries using the Euclidean distance in $O(\log^2n/\log\log n + k)$ time, insertions in $O(\log^{3+\eps}n/\log\log n)$ amortized time, and deletions in $O(\log^{5 + \eps} n/\log\log n)$ amortized time.
\end{lemma}

\begin{proof}
To generalize the dynamic $k$-NN data structure from Section~\ref{sub:Chan-k-NN} to other types of distance functions, we replace the deletion-only data structure used by Chan~\cite{Chan12-kNN} as $\D_0$ data structure by the data structure of Section~\ref{sub:Bootstrapping}. In our approach, we use the same data structure for both small and large conflict lists. Queries and updates are performed as before (see Sections~\ref{sub:dynamic-1NN} and~\ref{sub:Chan-k-NN}). This results in a dynamic $k$-lowest functions data structure that can be used for any type of functions for which we can construct $k$-shallow cuttings. As our $\D_0$ data structure only uses $k$-shallow cuttings, our approach is also somewhat simpler than Chan's, albeit at a slight increase in space usage. This problem can be solved by using the space saving idea discussed in Section~\ref{sub:Bootstrapping}.

\subparagraph{Query time.} 
Our bootstrapping data structure $\D_0$ has query time $Q_0(n') = O(\log r + n'/r)$ and deletion time $D_0(n') = O(r\log n')$. Because we query the shallow cutting at level $j_k$, the size of each conflict list we query is $O(k)$. By again using our scheme to find the $k$-nearest neighbors over the substructures simultaneously, the query time becomes:
\begin{align*}
    Q(n) &= O([\log n + Q_0(O(k))] \log_b n + k) \\
    &= O((\log n + (\log r + k/r))\log_b n + k).
\end{align*}
If we set $r = \log n$ (and $b = \log^\eps n$ just like Chan) we get
\begin{align*}
    Q(n) &= O((\log n + \log \log n + k/\log n)\log n/\log\log n + k) \\
    &= O(\log^2n/ \log\log n + k).
\end{align*}
Thus using our $\D_0$ data structure does not affect the query time.

\subparagraph{Update time.}
In the following we analyse the update time of the data structure in more detail. The update time given by Chan is: \begin{equation*}
U(n) = O(b^{O(1)} \log^6 n + \max_{m\leq n} D_0(m) \cdot b^{O(1)}\log^5 n).
\end{equation*}
Note that this update time is based on the old approach of Chan, where a plane can occur $O(b\log^2n)$ times in a tower. To give a more detailed analysis, we first study the insertion time and then the deletion time of $\D$. 

Lemma \ref{lem:properties} states that insertion time is given by $I(n) = O(b\log_b n \cdot (P(n)/n))$, where $P(n)$ is the preprocessing time of $\D$. Our
preprocessing time increases w.r.t. the original data structure, since after building the hierarchy of shallow cuttings for a tower, we additionally need to build the structures $\D_0$ on each of the conflict lists. As before, building the shallow cuttings takes $O(n\log n)$ time \cite{Chan19}. Next, we analyse the time to build all data structures $\D_0$. Note that the cutting at level $j$ in the hierarchy consists of $O(n/k_j)$ prisms, and the size each conflict list in the cutting is $O(k_j)$. Also note, there are $\log (n/k_0)$ cuttings in the hierarchy, as the lowest level cutting is a $k_0$-shallow cutting. Let $\alpha$ be the constant bounding the size of the conflict lists. Using that $P_0(n') = O(n' \log n')$, we find the following running time for building all $\D_0$ data structures of a single tower: 
\begin{equation*}
    \begin{split}
         \sum_{j=0}^{\log\frac{n}{k_0}} O\left(\frac{n}{ k_j}\right) \cdot P_0(\alpha k_j) &= \sum_{j=0}^{\log\frac{n}{k_0}} O\left(\frac{n}{ k_j}\right) \cdot O\left(\alpha k_j \log(\alpha k_j)\right) \\
         &= \sum_{j=0}^{\log\frac{n}{k_0}} O\left(n \log(\alpha k_j)\right) \\
         &= O\left(\log(n/k_0) \cdot n \log\left(\alpha k_0 2^{\log(n/k_0)}\right) \right) \\
         &= O(n\log^2n).
    \end{split}
\end{equation*}
The preprocessing time thus adheres to the recurrence relation $P(n) \leq P(n/b) + O(n \log^2 n)$, which solves to $P(n) = O(n \log^2n)$. It follows that $I(n) = O(b\log_b n \cdot (P(n)/n)) = O(b\log^2 n \log_b n) = O(\log^{3+\eps}n/\log\log n)$. Note that the improvement of building all shallow cuttings in a tower in $O(n\log n)$ time does not improve the insertion time to $O(\log^2 n)$ as in the $1$-NN data structure, because building the $\D_0$ data structures is the dominant term.

When deleting a plane $h$ that is live in tower $\Ti$, we remove the plane from 
all $\D_0$ of $\Ti$ with $h \in \D_0$. There are at most $O(b\log n)$ such data structures $\D_0$, because a plane can occur at most $O(b\log n)$ times in a tower. On the other hand, by Lemma \ref{lem:properties}, deleting a plane causes amortized $O(b \log n \log_b n)$ reinsertions. Each reinserted plane is also removed from the structures $\D_0$ of a single tower. This results in amortized $O(b \log_b n \log n)$ planes that are again removed from $O(b\log n)$ structures. We can thus formulate the deletion time as $D(n) = O(b \log n\log_b n \cdot (b\log n \cdot D_0(n) + I(n)))$. Plugging in $D_0(n) =
O(\log^2n)$ and $I(n) = O(b \log^2 n \log_b n)$, we find $D(n) = O(b^2\log^4 n \log_b n + b^2\log^3n \log_b^2
n) = O(b^{O(1)} \log^4n \log_b n) = O(\log^{5+\eps} n / \log \log n)$.

\subparagraph{Space usage.} The space usage of a $\D_0$ data structure storing $n'$ planes is $O(n'\log r)$. The space usage of $\D$ is thus $S(n) = \sum_{j=0}^{\log\frac{n}{k_0}} \frac{n}{k_j} \cdot O(k_j\log r) = O(n\log n\log\log n)$. Using the space reduction idea mentioned in Section~\ref{sub:Bootstrapping}, the space of a $\D_0$ data structure is only $O(n')$ for $n'\geq r^{1/\eps}$. The space usage of the entire data structure is then
\begin{equation*}
    S(n) = \sum_{j=0}^{j'} \frac{n}{k_j} \cdot O(k_j\log r) + \sum_{j=j'}^{\log\frac{n}{k_0}} \frac{n}{k_j} \cdot O(k_j) = O(j'n\log r + n\log n),
\end{equation*}
where $j'$ is such that $k_{j'} = r^{1/\eps}$. Using that $r = \log n$ we find $S(n) = O(n\log n)$.
\end{proof}

\subsection{Analysis of the running times for general shallow cuttings}
\label{sub:general_dynamic_k-nn}
We can use a similar scheme in a more general setting. In this section,
we describe the running times of the data structure w.r.t. any
algorithm that can construct vertical $k$-shallow cuttings. We assume that
$S(n,k) = \frac{1}{k}f(n)$, for some function $f(n)$. This implies
that the size of a $k$-shallow cutting including its conflict lists is
$S(n,k) \cdot O(k) = O(S(n,1))$.

We will prove the following theorem, which is similar to Lemma \ref{lem:kaplan}, but for $k$-lowest functions queries.

\thmkNNgeneral*

\begin{proof}
As in Section~\ref{sub:improved-k-NN}, we apply the $\D_0$ data structure from Section~\ref{sub:Bootstrapping} to the data structure described in Sections~\ref{sub:dynamic-1NN} and~\ref{sub:Chan-k-NN}. Instead of Lemma~\ref{lem:bootstrapping_ds}, we apply the general data structure of Lemma~\ref{lem:bootstrapping_ds_general}.

\subparagraph{Query time.} 
Using the query time of the general $\D_0$ data structure of Lemma~\ref{lem:bootstrapping_ds_general}, we have $Q_0(n') = O(Q(n',n'/r) + n'/r)$. Just as in the Euclidean case, we use $r = \log n$ and $b = \log^{\eps}n$. The query time then becomes
\begin{equation*}
    \begin{split}
        Q(n) &= O([Q(n,1) + Q_0(O(k))] \log_b n + k) 
        \\&= O([Q(n,1) + Q(k,k/r) + k/r]\log_b n + k) \\
        &= O(Q(n,1)\log n/\log\log n + k/\log\log n + k) 
        \\&= O(Q(n,1)\log n/\log\log n + k).
    \end{split}
\end{equation*}
Note that $Q(k,k/r) = O(Q(n,1))$, as $Q(n,k)$ is non-decreasing in $n$. Also, $O(\log_b n/ r) = O((\log n/ \log \log n) /\log n) = O(1/\log \log n)$.

\subparagraph{Update time.}
To determine the insertion time, we start by analyzing the preprocessing time of the data structure. In the preprocessing, we first construct the $O(\log n)$ shallow cuttings of a single tower in $O((T(n,1) \log n)$ time. Then, we again build all structures $\D_0$ on (the live part of) the resulting conflict lists. A $k_j$-shallow cutting consists of $S(n,k_j)$ prisms, each with conflict list of size $\leq \alpha k_j$. The time to build the $\D_0$ data structures is thus:
\begin{equation*}
\begin{split}
    \sum_{j = 0}^{\log\frac{n}{k_0}} S(n,k_j) \cdot P_0(\alpha k_j) &=
    \sum_{j = 0}^{\log\frac{n}{k_0}} S(n,k_j) \cdot O(T(\alpha k_j,\alpha k_j/r)\log r) \\
    &= O\left(S(n,1)\log r \cdot  \sum_{j = 0}^{\log\frac{n}{k_0}} \frac{1}{k_j} T(\alpha k_j,\alpha k_j/r)\right) \\
    &= O\left(S(n,1)\log r \cdot  \sum_{j = 0}^{\log\frac{n}{k_0}}
      \frac{1}{k_j} T(\alpha k_j,1)\right) \\
    &= O\left((S(n,1)T(n,1)/n)\log n  \log r \right) \\
    &= O\left((S(n,1)T(n,1)/n)\log n  \log \log n \right).
\end{split}
\end{equation*}
Here we used that $T(\alpha k_j,1)\geq \alpha k_j$ and $T(n,k)$ is non-decreasing in $n$, which implies that $T(\alpha k_j,1)/k_j \leq T(n,1)/n$.
It follows that 
\begin{equation*}
    P(n) = O((S(n,1)T(n,1)/n)\log n \log \log n),
\end{equation*}
and by Lemma~\ref{lem:properties}:
\begin{equation*}
    I(n) = O(b \log_b n \cdot P(n)/n) = O((S(n,1)T(n,1)/n^2)\log^{2+\eps}n).
\end{equation*}
Note that the insertion time has increased by an  $O(\log\log n)$ factor compared to the Euclidean case, because the preprocessing time of the $\D_0$ data structure now depends on $\log r$.

The deletion time is still determined by the number of reinsertions caused by a deletion. According to Lemma \ref{lem:properties}, the amortized number of reinsertions caused by a deletion is $O((S(n,1)/n)b\log n\log_b n)$. Each reinsertion in turn causes $O((S(n,1)/n)b\log n)$ updates on the $\D_0$ data structures. By Lemma~\ref{lem:bootstrapping_ds_general} we have that $D_0(n) = O((r/n)(S(n,1) + T(n,n/r)\log r))$. So, the deletion time is given by:
\begin{equation*}
\begin{split}
    D(n) &= O((S(n,1)/n)b\log n\log_b n  (D_0(n) (S(n,1)/n)b\log n + I(n))) \\
    &= O((S(n,1)/n)\log^{2+\eps} n/\log\log n  (D_0(n)  (S(n,1)/n)\log^{1+\eps} n + I(n)))\\
    &= O((S(n,1)^2T(n,n/\log n)/n^3)\log^{4+\eps} n \\
    &\hspace{13.6em}+ S(n,1)^2 T(n,1) \log^{4+\eps}n /n^3\log \log n) \\
    &= O((T(n,1)/\log \log n + T(n,n/\log n)) \cdot S(n,1)^2\log^{4+\eps}n /n^3).
\end{split}
\end{equation*}
Here, we used that $S(n,1) \leq T(n,1)$.

\subparagraph{Space usage.}
The space usage of the data structure is
\begin{equation*}
    S(n) = \sum_{j=0}^{\log\frac{n}{k_0}} S(n,k_j) \cdot O(S(\alpha k_j,1)\log r) = O(S(n,1)^2/n \cdot \log n\log\log n). \qedhere
\end{equation*}
\end{proof}

\subsection{Applications}\label{sec:fully_dynamic_applications}

\subparagraph{Points in the plane.}

For general distance functions of constant description complexity, for
which the lower envelope of any $t$ functions has $O(t)$ faces, edges,
and vertices, Liu~\cite{Liu20} recently presented an algorithm to
compute $k$-shallow cuttings of linear size. This improves the shallow
cuttings used by Kaplan \etal~\cite{Kaplan17} in the dynamic 1-NN data
structure. Liu~\cite{Liu20} also shows that in the static setting we can use a circular range query data structure~\cite{Agarwal13-circular-range-query} to improve space usage. Like in the Euclidean case, we can use this auxiliary data structure to improve the space usage of the $\D_0$ data structure to linear for large $n$.

\begin{corollary}
There is a fully dynamic data structure of size $O(n\log n)$ that stores a set of $n$ point sites and allows for $k$-nearest neighbors queries using a fixed constant description complexity distance function in $O(\log^2n/ \log\log n + k)$ time. A site can be inserted in $O(\log^{3+\eps} n/\log \log n)$ expected amortized time and deleted in $O(\log^{5+\eps} n/\log \log n)$ expected amortized time.
\end{corollary}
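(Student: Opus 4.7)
The plan is to instantiate Lemma~\ref{lem:k-NN_general} with Liu's~\cite{Liu20} construction of vertical $k$-shallow cuttings for constant description complexity distance functions. Liu's result gives $S(n,k) = O(n/k)$ (linear-size cuttings) and $Q(n,k) = O(\log n)$ point location time, while the construction time $T(n,k)$ carries a Davenport--Schinzel factor of the form $\lambda_{s+2}(\cdot)$ inherited from the complexity of the lower envelope of $n$ such functions.

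Plugging these parameters into Lemma~\ref{lem:k-NN_general}, the query bound follows immediately: $O(Q(n,1)\log n/\log\log n + k) = O(\log^2 n/\log\log n + k)$. For the insertion bound we substitute $S(n,1) = O(n)$ into $O((S(n,1)T(n,1)/n^2)\log^{2+\eps} n) = O((T(n,1)/n)\log^{2+\eps} n)$, and for the deletion bound into $O((T(n,1)/\log\log n + T(n,n/\log n)) \cdot S(n,1)^2 \log^{4+\eps} n/n^3)$. The $\lambda_{s+2}(\log n)$ argument (rather than $\lambda_{s+2}(n)$) should arise because, with the bootstrap parameter $r = \log n$ fixed in Section~\ref{sub:Bootstrapping}, the relevant $T$-terms arising during the $\D_0$ rebuilds are evaluated on sizes for which the dominant Davenport--Schinzel contribution is $\lambda_{s+2}(\log n)$; carrying the telescoping sums of Section~\ref{sub:general_dynamic_k-nn} through then yields the claimed $O(\log^{5+\eps} n\, \lambda_{s+2}(\log n))$ insertion and $O(\log^{7+\eps} n\, \lambda_{s+2}(\log n))$ deletion times.

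For the space bound, the native estimate from Lemma~\ref{lem:k-NN_general} is only $O(S(n,1)^2/n \cdot \log n\log\log n) = O(n\log n\log\log n)$, a $\log\log n$ factor above the target. To remove it we mirror the Euclidean trick of Section~\ref{sub:Bootstrapping}: inside each sufficiently large $\D_0$ (those on at least $r^{1/\eps}$ functions), explicit conflict-list storage is replaced by an auxiliary circular range reporting structure~\cite{Agarwal13-circular-range-query} in the spirit of the one Liu uses for his static $k$-NN result. Each such $\D_0$ then uses only linear space, and the same telescoping as in the Euclidean analysis (with $k_{j'} = r^{1/\eps}$, so $j' = O(\log\log n)$) gives total space $O(n\log n)$.

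The main obstacle will be the careful bookkeeping of the Davenport--Schinzel factor through the proof of Lemma~\ref{lem:k-NN_general}: the construction time for a tower sums $S(n,k_j) \cdot T(\alpha k_j, \alpha k_j/r)$ over the whole hierarchy, with $T$ carrying a $\lambda_{s+2}$-type overhead at every level. We expect to exploit both that $T(m,1)/m$ is non-decreasing (so the contributions telescope cleanly) and that the $\D_0$ substructures dominating the update cost operate at the scale governed by $r = \log n$, in order to pin the final $\lambda$-argument to $\log n$ rather than $n$. Once that is in place, combining with the purge/reinsertion amortization from Sections~\ref{sub:dynamic-1NN} and~\ref{sub:Chan-k-NN} produces the stated update times.
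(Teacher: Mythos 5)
Your overall route is the paper's: instantiate Lemma~\ref{lem:k-NN_general} with Liu's vertical shallow cuttings ($S(n,k)=O(n/k)$, $Q(n,k)=O(\log n)$), and shave the extra $\log\log n$ off the space bound by replacing explicit conflict lists in the large $\D_0$ structures with Liu's circular range reporting structure~\cite{Agarwal13-circular-range-query}, exactly as the paper does in the paragraph preceding the corollary. The query, insertion, and deletion computations you sketch are the right substitutions.

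However, there is a genuine gap in how you propose to obtain the $\lambda_{s+2}(\log n)$ factor. You treat it as something to be \emph{derived} from the dynamic machinery --- specifically, you claim it arises because the $\D_0$ substructures operate at the scale $r=\log n$, so the ``relevant $T$-terms'' are evaluated on logarithmic-size inputs. That argument cannot work: the dominant terms in the update bounds of Lemma~\ref{lem:k-NN_general} are $T(n,1)$ (insertion) and $T(n,n/\log n)$ (deletion), both evaluated on all $n$ functions, and the tower construction itself sums $S(n,k_j)\cdot T(\alpha k_j,\cdot)$ over levels where $k_j$ ranges up to $\Theta(n)$. If Liu's construction time carried a $\lambda_{s+2}(n)$ factor, that factor would survive into the final insertion and deletion times and the corollary as stated would be false. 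The actual resolution is that no derivation is needed: Liu's bound for constructing a single $k$-shallow cutting on $n$ functions is already $T(n,k)=O(n\log^3 n\,\lambda_{s+2}(\log n))$ in expectation --- the $\log n$ inside $\lambda_{s+2}$ is intrinsic to his algorithm (it is not the complexity of the lower envelope of $n$ functions, which is linear under the stated assumptions). With that value of $T$ in hand, the claimed bounds fall out of Lemma~\ref{lem:k-NN_general} by direct substitution, with no further bookkeeping of the Davenport--Schinzel factor required. You should cite the concrete $T(n,k)$ rather than plan to manufacture the $\lambda_{s+2}(\log n)$ argument yourself.
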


\begin{proof}

The $k$-shallow cutting algorithm by Liu~\cite{Liu20} produces shallow cuttings of size $S(n,k) = O(n/k)$ in $T(n,k) = O(n \log^{3} n\lambda_{s+2}(\log n))$ expected time. A prism containing a query point can be located using a point location data structure in $Q(n,k) = O(\log n)$ time. We can then apply Theorem~\ref{thm:k-NN_general} directly to obtain a $k$-NN data structure.

In the recently published journal version of the paper~\cite{LiuJournal}, Liu improves upon these results even further. The improved algorithm produces shallow cuttings of size $S(n,k) = O(n/k)$ in $T(n,k) = O(n \log n)$ expected time. To be precise, the construction algorithm of Liu~\cite{LiuJournal} can build \emph{all} shallow cuttings in a hierarchy in $O(n\log n)$ expected time. This means that the preprocessing and deletion time of the bootstrapping data structure (Lemma~\ref{lem:bootstrapping_ds_general}) are improved to $O(n\log n)$ and $O(r\log n)$, respectively.
Applying Theorem \ref{thm:k-NN_general} to these shallow cuttings, and adjusting the update times to account for this improvement in the bootstrapping data structure, proves the result.
\end{proof}

\subparagraph{Points in a simple polygon.}
Agarwal \etal~\cite{Staals18} show how to use a data structure based on the structure by Chan to perform geodesic 1-NN queries. We adapt this data structure by again storing the conflict lists in the data structure of Section \ref{sub:Bootstrapping}. The data structure by Agarwal \etal partitions the polygon recursively into two subpolygons, as described in Section \ref{sec:geodesicknn_insertion_only}. It then uses the Chan data structure to find the nearest neighbor in the corresponding subpolygon at each of the $O(\log m)$ levels of the decomposition. Note that in the $k$-NN case, we can again use our scheme of Section \ref{sec:insertion-only} to find the $k$-NN spread over $O(\log m)$ data structures simultaneously.

The shallow-cuttings that are described  in~\cite{Staals18} use pseudo-prisms that are no longer of constant complexity. We refer to the paper for the precise definition. For our purposes, we need only the property that we can still find the prism containing a query point efficiently.

\begin{corollary}
Let $\mathcal{P}$ be a polygon with $m$ vertices. There is a fully dynamic data structure of size $O(n\log^5 n \log m \log\log n + m)$ that stores a set of $n$ point sites in $\mathcal{P}$ and allows for geodesic $k$-nearest neighbors queries in $O(\log^2n\log^2m / \log\log n + k\log m)$ time, insertions in $O(\log^{8+\eps}n\log m + \log^{7+\eps}n\log^3m)$ expected amortized time, and deletions in $O((\log^{12+\eps}n\log m + \log^{11+\eps}n\log^3m)/\log \log n)$ expected amortized time.
\end{corollary}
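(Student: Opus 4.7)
The plan is to combine the recursive polygon-decomposition framework of Agarwal, Arge, and Staals~\cite{Staals18} with the general fully dynamic $k$-NN structure of Lemma~\ref{lem:k-NN_general}, joined together by the multi-structure query procedure of Section~\ref{sec:query_procedure}. Following~\cite{Staals18}, I recursively split $\P$ by diagonals into two subpolygons of roughly equal size, yielding a balanced decomposition of depth $O(\log m)$. At each level the sites $S$ are partitioned between the two subpolygons, and I maintain two dynamic $k$-NN structures there, one for each side, each supporting queries from a point in the opposite subpolygon.

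Each per-level structure is the dynamic $k$-NN data structure of Section~\ref{sub:improved-k-NN} instantiated via Lemma~\ref{lem:k-NN_general}, built on top of the pseudo-prism shallow cuttings of the geodesic Hamiltonian abstract Voronoi diagram from~\cite{Staals18}, with parameters $S(n,k)$, $T(n,k)$, $Q(n,k)$ inherited from that paper; the bootstrapping structure $\D_0$ of Section~\ref{sub:Bootstrapping} is applied to the conflict lists in its general form (Lemma~\ref{lem:bootstrapping_ds_general}), which only requires that vertical shallow cuttings be available. To answer a query at $q$, observe that $q$ lies in exactly one subpolygon at every level, so the global $k$-NN of $q$ is spread over $t=O(\log m)$ per-level substructures. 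I apply Lemma~\ref{lem:tsets-knn} with this $t$; since evaluating a geodesic distance costs $T=O(\log m)$, the variant at the end of Section~\ref{sub:Query_analysis} yields a query time of $O(Q_{\mathrm{lvl}}(n)\log m + k\log m)$, which after plugging in the per-level bound from Lemma~\ref{lem:k-NN_general} matches the claimed $O(\log^2 n\log^2 m/\log\log n + k\log m)$.

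Updates are propagated to all $O(\log m)$ per-level structures, so the insertion and deletion times from Lemma~\ref{lem:k-NN_general} are multiplied by $\log m$. Substituting the geodesic shallow-cutting parameters of~\cite{Staals18} (for which $Q(n,1)=O(\log n\log m)$, while $S(n,1)$ and $T(n,1)$ produce the stated exponents once multiplied through) into Lemma~\ref{lem:k-NN_general}, scaling by $\log m$ for the levels, and adding the $O(m)$ term for representing $\P$, then gives the claimed insertion, deletion, and space bounds.

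The main obstacle is checking that the Section~\ref{sec:Fully_Dynamic} machinery survives the switch from constant-complexity prisms to the non-constant-complexity pseudo-prisms of the geodesic setting. Concretely, I need to verify that (i) point location in the downward projection of a single shallow cutting still runs in $O(\log n\log m)$ time, so that $Q(n,1)$ has the right form; (ii) the conflict lists needed by the pruning hierarchy and by $\D_0$ can still be computed within the time $T(n,k)$, which is handled by the ``corners-of-prisms'' evaluation trick available for Hamiltonian abstract Voronoi diagrams in~\cite{Staals18}; and (iii) the Kaplan~\etal\ pruning analysis and the reinsertion accounting in the $\D_0$ bootstrap are purely combinatorial in the sizes of conflict lists and so carry over verbatim. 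Once these are checked against the 1-NN analysis in~\cite{Staals18}, the final bounds follow by mechanical substitution into Lemma~\ref{lem:k-NN_general}.
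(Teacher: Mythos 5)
Your proposal follows essentially the same route as the paper: instantiate the general dynamic $k$-NN structure of Lemma~\ref{lem:k-NN_general} (with the general $\D_0$ of Lemma~\ref{lem:bootstrapping_ds_general}) on the geodesic pseudo-prism shallow cuttings of Agarwal, Arge, and Staals at each of the $O(\log m)$ levels of the balanced polygon decomposition, combine queries over the $O(\log m)$ relevant substructures via Lemma~\ref{lem:tsets-knn} with distance-evaluation cost $O(\log m)$, and multiply the per-level update and space bounds by $\log m$. The verification points you flag (point location in a single cutting, conflict-list construction via the Hamiltonian abstract Voronoi diagram, and the purely combinatorial pruning/reinsertion accounting) are exactly the ones the paper relies on, so the proposal is correct and matches the paper's proof.
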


\begin{proof}
Theorem 23 of \cite{Staals18} states that a $k$-shallow cutting $\Lambda_k(F)$ of $F$ of topological complexity $S(n,k) = O((n/k)\log^2 n)$
can be computed in expected time $O((n/k)\log^3 n(\log n + \log^2 m) + n \log^2 m + n \log^3 n \log m)$. Our improvement for the static $k$-NN data structure of Section \ref{sec:insertion-only} also leads to a slight improvement in the construction time of a shallow cutting. In particular, we can build a shallow cutting in $T(n,k) = O((n/k)\log^3 n(\log n + \log^2 m) + n \log^2 m + n \log^2 n \log (n+m))$ expected time. Locating the correct prism takes $Q(n,k) = O(\log n \log m)$ time. By applying Theorem \ref{thm:k-NN_general} we find for $\D$:
\begin{equation*}
    \begin{split}
        &Q(n) = O(\log^2n\log m / \log\log n + k)\\
        &I(n) = O(\log^{8+\eps}n + \log^{7+\eps}n\log^2m)\\
        &D(n) = O((\log^{12+\eps}n + \log^{11+\eps}n\log^2m)/\log \log n)\\
        &S(n) = O(n\log^5n\log\log n)
    \end{split}
\end{equation*}

Using this data structure at each of the $O(\log m)$ levels of the decomposition and applying the scheme from Section \ref{sec:insertion-only} to achieve $O(\polylog(n,m) + k)$ query time proves the result.
\end{proof}

The following theorem summarizes the results of this section, together with the results from Section~\ref{sub:improved-k-NN}.

\begin{theorem}\label{thm:fully_dynamic_knn}
There is a fully dynamic data structure of size $S(n)$ that stores a set of $n$ sites in the plane and allows for $k$-nearest neighbors queries in $Q(n)$ time, insertions in $I(n)$  amortized time, and deletions in $D(n)$ amortized time. For the general and geodesic case, the update times are expected running times. $\P$ is a simple polygon with $m$ vertices.

\centering
\renewcommand{\arraystretch}{1.8}
\hspace{-0.6cm}\begin{tabular}{clll}
      & \hspace{3mm} Euclidean & \hspace{3mm} General & \hspace{3mm} Geodesic in $\P$ \\
      \midrule
     $Q(n)$ & $O\left(\frac{\log^2n}{\log\log n} + k\right)$ & $O\left(\frac{\log^2n}{\log\log n} + k\right)$ & 
         $O\Big(\frac{\log^2n\log^2m}{\log\log n}+ k\log m\Big)$ \\
     $I(n)$ & $O\left(\frac{\log^{3+\eps}n}{\log\log n}\right)$& $O\left(\frac{\log^{3+\eps}n}{\log\log n}\right)$ & $O(\log^{8+\eps}n\log m + \log^{7+\eps}n\log^3m)$\\
     $D(n)$ & $O\left(\frac{\log^{5 + \eps} n}{\log\log n}\right)$& $O\left(\frac{\log^{5 + \eps} n}{\log\log n}\right)$ &
       $O\Big(\frac{\log^{12+\eps}n\log m}{\log \log n} + \frac{\log^{11+\eps}n\log^3m}{\log \log n}\Big)$\\
     $S(n)$ & $O(n\log n)$ & $O(n\log n)$ &$O(n\log^5 n \log m \log\log n + m)$
\end{tabular}
\end{theorem}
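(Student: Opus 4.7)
The plan is to observe that Theorem~\ref{thm:fully_dynamic_knn} is a consolidation theorem, and that each column of the table has already been established in the preceding subsections; the proof therefore reduces to matching each column with the appropriate earlier result and verifying that the bounds line up. I would organize the write-up into one paragraph per column.

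For the Euclidean column, the plan is to cite the unnamed theorem at the end of Section~\ref{sub:improved-k-NN}. That theorem was obtained by plugging the bootstrapping structure $\D_0$ of Lemma~\ref{lem:bootstrapping_ds} (with $r = \log n$) into the dynamic framework of Sections~\ref{sub:dynamic-1NN} and~\ref{sub:Chan-k-NN}, and combining it with the multi-substructure query technique of Lemma~\ref{lem:tsets-knn} to eliminate the $O(k\log_b n)$ term in the query time. The only caveat to flag is the space entry $O(n\log n)$ in the table: the sharper $O(n)$ bound from Section~\ref{sub:improved-k-NN} (obtained using the halfspace range-reporting trick) is folded into the weaker $O(n\log n)$ bound to be consistent across columns.

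For the general $\R^2$ column, the plan is to apply Lemma~\ref{lem:k-NN_general} directly to Liu's $k$-shallow cutting algorithm~\cite{Liu20}, which has parameters $S(n,k) = O(n/k)$, $T(n,k) = O(n\log^3 n \cdot \lambda_{s+2}(\log n))$, and $Q(n,k) = O(\log n)$. Substituting these into the formulas of Lemma~\ref{lem:k-NN_general} yields the stated $Q(n)$, $I(n)$, $D(n)$ bounds after routine simplification, and the $O(n\log n)$ space follows from the halfspace range-reporting space-saving idea of Section~\ref{sub:Bootstrapping} applied to the circular range query data structure of~\cite{Agarwal13-circular-range-query}. This is essentially the content of the first corollary of Section~\ref{sec:fully_dynamic_applications}.

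The geodesic column is the only place where something has to be assembled beyond a direct citation, and is therefore the main (mild) obstacle. The plan is to instantiate Lemma~\ref{lem:k-NN_general} using the Agarwal--Arge--Staals shallow-cutting parameters $S(n,k) = O((n/k)\log^2 n)$, $Q(n,k) = O(\log n \log m)$, and the improved preprocessing time $T(n,k) = O((n/k)\log^3 n(\log n + \log^2 m) + n\log^2 m + n\log^2 n \log(n+m))$ obtained from the point-location refinement of Section~\ref{sec:geodesicknn_insertion_only}. This yields one dynamic $k$-NN substructure per level of the balanced diagonal decomposition of $\P$. The hard part is combining these $O(\log m)$ substructures into a single $k$-NN data structure without incurring an additive $O(k\log m)$ term from naively merging per-level answers: this is precisely where the multi-substructure query technique of Lemma~\ref{lem:tsets-knn} is invoked, using the variant accounting for $T = O(\log m)$-time distance evaluations described at the end of Section~\ref{sec:query_procedure}. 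Summing the per-level query, update, and space costs then gives the geodesic entries of the table, matching the second corollary of Section~\ref{sec:fully_dynamic_applications}.
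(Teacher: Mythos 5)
Your proposal is correct and matches the paper exactly: the theorem is introduced with "The following theorem summarizes the results of this section, together with the results from Section~\ref{sub:improved-k-NN}," and each column is indeed just the unnamed Euclidean theorem of Section~\ref{sub:improved-k-NN} and the two corollaries of Section~\ref{sec:fully_dynamic_applications}, assembled precisely as you describe (including the Lemma~\ref{lem:tsets-knn} invocation with $O(\log m)$-time distance evaluations for the geodesic column). Your observation about the Euclidean space entry ($O(n)$ in Section~\ref{sub:improved-k-NN} versus $O(n\log n)$ in the table) is a fair reading of a minor inconsistency in the paper itself, not a gap in your argument.
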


\bibliography{bibliography}

\begin{thebibliography}{10}

\bibitem{Afshani09}
Peyman Afshani and Timothy~M. Chan.
\newblock Optimal halfspace range reporting in three dimensions.
\newblock In {\em Proceedings of the Twentieth Annual {ACM-SIAM} Symposium on
  Discrete Algorithms, {SODA}}, pages 180--186. {SIAM}, 2009.

\bibitem{Staals18}
Pankaj~K. Agarwal, Lars Arge, and Frank Staals.
\newblock Improved dynamic geodesic nearest neighbor searching in a simple
  polygon.
\newblock In {\em 34th International Symposium on Computational Geometry,
  SoCG}, volume~99 of {\em LIPIcs}, pages 4:1--4:14. Schloss Dagstuhl -
  Leibniz-Zentrum f{\"{u}}r Informatik, 2018.

\bibitem{Agarwal95}
Pankaj~K. Agarwal and Jir{\'{\i}} Matousek.
\newblock Dynamic half-space range reporting and its applications.
\newblock {\em Algorithmica}, 13(4):325--345, 1995.

\bibitem{Agarwal13-circular-range-query}
Pankaj~K. Agarwal, Jir{\'{\i}} Matousek, and Micha Sharir.
\newblock On range searching with semialgebraic sets. {II}.
\newblock {\em {SIAM} J. Comput.}, 42(6):2039--2062, 2013.

\bibitem{Andoni08}
Alexandr Andoni and Piotr Indyk.
\newblock Near-optimal hashing algorithms for approximate nearest neighbor in
  high dimensions.
\newblock {\em Commun. {ACM}}, 51(1):117--122, 2008.

\bibitem{Chan00}
Timothy~M. Chan.
\newblock Random sampling, halfspace range reporting, and construction of $\leq
  k$-levels in three dimensions.
\newblock {\em {SIAM} J. Comput.}, 30(2):561--575, 2000.

\bibitem{Chan10}
Timothy~M. Chan.
\newblock A dynamic data structure for 3-d convex hulls and 2-d nearest
  neighbor queries.
\newblock {\em J. {ACM}}, 57(3):16:1--16:15, 2010.

\bibitem{Chan12-kNN}
Timothy~M. Chan.
\newblock Three problems about dynamic convex hulls.
\newblock {\em Int. J. Comput. Geom. Appl.}, 22(4):341--364, 2012.

\bibitem{Chan19}
Timothy~M. Chan.
\newblock Dynamic geometric data structures via shallow cuttings.
\newblock In {\em 35th International Symposium on Computational Geometry,
  SoCG}, volume 129 of {\em LIPIcs}, pages 24:1--24:13. Schloss Dagstuhl -
  Leibniz-Zentrum f{\"{u}}r Informatik, 2019.

\bibitem{Chan21Fix}
Timothy~M. Chan.
\newblock personal communication, 2021.

\bibitem{Chan16}
Timothy~M. Chan and Konstantinos Tsakalidis.
\newblock Optimal deterministic algorithms for 2-d and 3-d shallow cuttings.
\newblock {\em Discret. Comput. Geom.}, 56(4):866--881, 2016.

\bibitem{cover1967nearest}
Thomas Cover and Peter Hart.
\newblock Nearest neighbor pattern classification.
\newblock {\em IEEE transactions on information theory}, 13(1):21--27, 1967.

\bibitem{Frederickson93}
Greg~N. Frederickson.
\newblock An optimal algorithm for selection in a min-heap.
\newblock {\em Inf. Comput.}, 104(2):197--214, 1993.

\bibitem{Guibas89}
Leonidas~J. Guibas and John Hershberger.
\newblock Optimal shortest path queries in a simple polygon.
\newblock {\em J. Comput. Syst. Sci.}, 39(2):126--152, 1989.

\bibitem{Kaplan17}
Haim Kaplan, Wolfgang Mulzer, Liam Roditty, Paul Seiferth, and Micha Sharir.
\newblock Dynamic planar voronoi diagrams for general distance functions and
  their algorithmic applications.
\newblock {\em Discret. Comput. Geom.}, 64(3):838--904, 2020.

\bibitem{lee1982kthorder_vd}
Der-Tsai Lee.
\newblock On k-nearest neighbor voronoi diagrams in the plane.
\newblock {\em IEEE Transactions on Computers}, C-31(6):478--487, June 1982.

\bibitem{Liu20}
Chih{-}Hung Liu.
\newblock Nearly optimal planar \emph{k} nearest neighbors queries under
  general distance functions.
\newblock In {\em Proceedings of the Thirty-First Annual {ACM-SIAM} Symposium
  on Discrete Algorithms, {SODA}}, pages 2842--2859. {SIAM}, 2020.

\bibitem{LiuJournal}
Chih{-}Hung Liu.
\newblock Nearly optimal planar $k$ nearest neighbors queries under general
  distance functions.
\newblock {\em {SIAM} J. Comput.}, 51(3):723--765, 2022.

\bibitem{Liu13}
Chih{-}Hung Liu and D.~T. Lee.
\newblock Higher-order geodesic voronoi diagrams in a polygonal domain with
  holes.
\newblock In {\em Proceedings of the Twenty-Fourth Annual {ACM-SIAM} Symposium
  on Discrete Algorithms, {SODA}}, pages 1633--1645. {SIAM}, 2013.

\bibitem{m-rph-92}
Ji{\v r}{\' i} Matou{\v s}ek.
\newblock Reporting points in halfspaces.
\newblock {\em Computational Geometry Theory and Applications}, 2(3):169--186,
  1992.

\bibitem{Oh20}
Eunjin Oh and Hee{-}Kap Ahn.
\newblock Voronoi diagrams for a moderate-sized point-set in a simple polygon.
\newblock {\em Discret. Comput. Geom.}, 63(2):418--454, 2020.

\bibitem{Overmars83}
Mark~H. Overmars.
\newblock {\em The Design of Dynamic Data Structures}, volume 156 of {\em
  Lecture Notes in Computer Science}.
\newblock Springer, 1983.

\end{thebibliography}

\end{document}